\newtheorem{theorem}{Theorem}[section]
\newtheorem{lemma}[theorem]{Lemma}
\newtheorem{meta-theorem}[theorem]{Meta-Theorem}
\newtheorem{claim}[theorem]{Claim}
\newtheorem{remark}[theorem]{Remark}
\newtheorem{corollary}[theorem]{Corollary}
\definecolor{darkgreen}{rgb}{0,0.5,0}
\crefname{theorem}{Theorem}{Theorems}
\Crefname{lemma}{Lemma}{Lemmas}
\Crefname{claim}{Claim}{Claims}
\algnewcommand\algorithmicswitch{\textbf{switch}}
\algnewcommand\algorithmiccase{\textbf{case}}
\newcommand{\eps}{\varepsilon}
\newcommand{\poly}{\textrm{poly}}
\newcommand{\mpc}{\mathsf{MPC}}
\newcommand{\local}{\mathsf{LOCAL}}
\newcommand{\mis}{\mathsf{MIS}}
\newcommand{\parent}{\textrm{parent}}
\newcommand{\bigO}{O}
\renewcommand{\paragraph}[1]{\vspace{0.15cm}\noindent {\bf #1}:}
\newcommand{\para}[1]{\vspace{0.15cm}\noindent {\bf #1}}
\begin{document}

\date{}

\author{
Sebastian Brandt\\
  \small ETH Zurich \\
  \small brandts@ethz.ch
	\and
	 Manuela Fischer\\
  \small ETH Zurich \\
  \small manuela.fischer@inf.ethz.ch
	\and
	 Jara Uitto\\
  \small ETH Zurich \& Uni. Freiburg\\
  \small jara.uitto@inf.ethz.ch
 }

\setcounter{page}{0}

\title{Breaking the Linear-Memory Barrier in $\mpc$:  \\ Fast $\mis$ on Trees with Strongly Sublinear Memory}

\maketitle

\thispagestyle{empty}

\begin{abstract}
Recently, studying fundamental graph problems %~\cite{Assadi2017,Czumaj2017}
 in the \emph{Massively Parallel Computation ($\mpc$)} framework, inspired by the \emph{MapReduce} paradigm, has gained a lot of attention.
	An assumption common to a vast majority of approaches %~[SODA'10],%\cite{Karloff2010}, 
	is to allow $\widetilde{\Omega}(n)$ memory per machine, where $n$ is the number of nodes in the graph and $\widetilde{\Omega}$ hides polylogarithmic factors.
	However, as pointed out by Karloff et al.~[SODA'10] and Czumaj et al.~[STOC'18], %\cite{Czumaj2017} 
	it %may be 
	might be unrealistic for 
	%impractical to assume that 
	a single machine %can 
	%to hold every node of the input graph, even in the case of sparse graphs.
	to have linear or only slightly sublinear memory. 
	%Moreover, 
	%(or only slightly sublinear) space. 
	%such a generous assumption gives rise to trivial solutions for sparse graphs, by loading the graph into one machine and locally solving the problem. 
	%Besides giving rise to trivial algorithms for sparse graphs, might be unrealistic 
	%but might also be unrealistic for huge $n$. 
	
	%Furthermore, this generous setting allows for a trivial one-round solution for sparse graphs, by loading the graph into one machine and locally solving the problem. 
	%storing the graph 
	%implies that all graphs that are sparse enough can be solved locally on one machine in one round.
	
	In this paper, we thus study a more practical variant of the $\mpc$ model which only requires substantially sublinear or even subpolynomial memory per machine. In contrast to the linear-memory $\mpc$ model and also to streaming algorithms, in this low-memory $\mpc$ setting, a single machine will only see a small number of nodes in the graph. 
	%This requires more local approaches. 
	%makes more zwwinlocal approaches indispensable. 
	%it has to base its decisions on local information only.
	%While $\Omega(n)$ memory often allows very easy or even trivial one-round solutions for sparse graphs---by loading the graph into one machine and locally solving the problem---such highly non-parallel algorithms are excluded by the low-memory model. We introduce new techniques suitable for this memory-restricted/low-memory setting.  
	%where the memory per machine is . 
	%study the \emph{Maximal Independent Set ($\mis$)} problem in a variant of the $\mpc$ model, where the memory per machine is polynomially smaller than the number of nodes, going as low as $n^\eps$ for any constant $\eps$.	
	%
	%Motivated by this imposed locality, we show how to make use 
	We introduce a new and strikingly simple technique to cope with this imposed locality. 
	%while still making use of the global communication of the $\mpc$ model.

In particular, we show that the \emph{Maximal Independent Set ($\mis$)} problem can be solved efficiently, that is, in $\bigO(\log^3 \log n)$ rounds, when the input graph is a tree. This constitutes an almost exponential speed-up over the low-memory $\mpc$ algorithm in $\widetilde{O}(\sqrt{\log n})$-algorithm in a concurrent work by Ghaffari and Uitto [SODA'19] and substantially reduces the local memory from $\widetilde{\Omega}(n)$ required by the recent $\bigO(\log \log n)$-round $\mis$ algorithm of Ghaffari et al.~[PODC'18] to $n^{\eps}$ for any $\eps>0$, without incurring a significant loss in the round complexity. Moreover, it demonstrates how to make use of the all-to-all communication in the MPC model to almost exponentially improve on the corresponding bound in the $\mathsf{LOCAL}$ and $\mathsf{PRAM}$ models by Lenzen and Wattenhofer [PODC'11].	
\end{abstract}
\newpage

\section{Introduction}
\paragraph{Parallel Computation Paradigms for Massive Data}
When confronted with huge data sets, purely sequential approaches become untenably inefficient. To address this issue, several parallel computation frameworks specially tailored for processing large scale data have been introduced. 
 %the most successful one by Dean and Ghemavat \cite{dean2008mapreduce}: 
%Due to the ever increasing amount of data available, purely sequential approaches have reached their limits. To address this issue, models of parallel computation were developed/designed/introduced. One particular successful framework was introduced by :  
%MapReduce, a parallel computing platform . Its key feature %, that makes it stand out from previous parallel computation models, 
%is that in MapReduce sequential and parallel computation are interleaved. 
%
Inspired by the MapReduce paradigm \cite{dean2008mapreduce}, Karloff, Suri, and Vassilvitskii \cite{karloff2010model} proposed the \emph{Massively Parallel Computation ($\mpc$)} model, % a formal model of efficient computation using this MapReduce paradigm, 
which was later refined in many works \cite{goodrich2011sorting,beame2014skew,Andoni2014,beame2017communication,czumaj2017round}.
%\matodo{where does it come from, what is motivation for it, why is it interesting, standard assumption, what has been studied, relation to other models}

\paragraph{Massively Parallel Computation Model}
In the $\mpc$ model, an input instance of size $N$ is distributed across $M$ machines with local memory of size $S$ each. 
The computation proceeds in rounds, each round consisting of \emph{local computation} at the machines interleaved with \emph{global communication} (also called \emph{shuffling}, adopting the MapReduce terminology) between the machines. 

% by message exchange.   
%parallel computing followed by a communication phase. 
%In the parallel computing phase, all machines simultaneously process their local data without communicating with each other. Then, at the end of each round, the machines can exchange messages. 
In the shuffling step, every machine is allowed to send as many messages to as many machines as it wants, as long as for every machine the total size of sent and received messages does not exceed its local memory capacity. The quantity of main interest is the round complexity: the number of rounds needed until the problem is solved, that is, until every machine outputs its part of the solution. This measure constitutes a good estimate for the actual running time, as local computation is presumed to be negligible compared to the cost-intensive shuffling, which requires a massive amount of data to be transferred between machines. 
%As opposed to previous works, e.g. in \cite{feldman, muthukrishnan,sidiropoulos,stein,svitkina}, the power of an individual machine in a single computation phase is not restricted more than to polynomial time.

\paragraph{Sublinear Memory Constraint}  
Note that $S \geq N$ leads to a degenerate case that allows for a trivial solution. Indeed, as the data fits into the local memory of a single machine, the input can be loaded there, and a solution can be computed locally. %That is why in the $\mpc$ model $S < N$ is required.  
%The full power of the parallel programming paradigm thus is realized only when the input is too big to fit into memory of a single machine. 
Due to the targeted application of $\mpc$ in the presence of massive data sets, thus large $N$, it is often crucial that $S$ is not only smaller than $N$ but actually substantially sublinear in $N$. 
The total memory $M\cdot S$ in the system has to be at least $N$, so that the input actually fits onto the machines, but ideally not much larger. 
Summarized, one requires $S=\widetilde{O}\left(N^{\eps}\right)$ memory on each of the $M=\widetilde{O}\left(N^{1-\eps'}\right)$ machines, for $0<\eps'\leq \eps<1$. %\matodo{poly log n slack?}
%, and $M=\Omega(N^{1-\eps})$. 
%, as linear in $N$ storage might just be practically impossible. 
%it is just unrealistic to assume that a linear (or even superlinear) number of machines is available, if $N$ is large. Moreover, note that $S=N$ is the degenerate sequential case, where a problem can just trivially by mapping it to a single machine, and then solve the problem there. the
%As proposed by \cite{}, we therefore require $S=O(N^{\eps})$, for any $0<\eps<1$. 
%%\footnote{and possibly even $\eps \rightarrow 0$.}. 
%%, although very small $\eps$ tends to get unrealistic, as it requires much more memory per machine than machines in total.
  %and $\Omega(N^{1-\eps})\leq M =o(N)$ .

\paragraph{Sublinear Memory for Graph Problems}
%When studying 
%For graph problems on $n$-node graphs, the input can be as large as $N=O(n^2)$ and as low as $\Omega(n)$, depending on the sparsity of the graph. 
%When studying graph problems, on a graph with $n$ nodes, say, 
%Many recent 
%superlinear, linear, or mildly subslinear
Basically all known $\mpc$ techniques for graph problems need essentially linear in $n$---for instance, $\widetilde{\Omega}(n)$ or mildly sublinear like $n^{1-o(1)}$---memory per machine, where $n$ is the number of nodes in the input graph\footnote{In the context of graph problems, it is typical to assume that all incident edges of a node are stored on the same machine, resulting in two copies of an edge, one for each endpoint. We refer to \cite[Section 1.1]{pandurangan2016fast} for a thorough discussion. Also see the remark at the end of this section.}. We refer to \cite{behnezhad2018brief} for a brief discussion of this assumption. Note that for sparse graphs with $N=\widetilde{O}(n)$ edges, this %mplies essentially linear in $N$ memory, which 
violates the sublinear memory constraint, getting close to the degenerate regime. This issue has been artificially circumvented by explicitly restricting the attention to dense graphs with $N=\widetilde{\Omega}(n^{1+\eps})$ edges, as to ensure sublinearity in $N$ while still not having to relinquish the nice property that (essentially) all nodes fit into the memory of a single machine \cite{karloff2010model}. 

Besides being a stretch of the definition, this additionally imposed condition of denseness of the input graph does not seem to be realistic. In fact, as recently also pointed out by \cite{czumaj2017round}, most practical large graphs are sparse. 
%, such as the Internet or social networks, are sparse. For instance, in the Internet and social network graphs, almost all nodes have small degree.
For instance in the Internet, most of the nodes have a small degree.  
%%\paragraph{Parameters for Dense Graphs}
Even for dense graphs, where in theory the sublinear memory constraint is met, practicability of the parameter range does not need to be ensured; %there are still cases in which, in order to being able to actually implement an $\mpc$ algorithm, 
linear or slightly sublinear in $n$ might be prohibitively large. 

Furthermore, it is a very natural question to ask whether there is a fundamental reason why the known techniques get stuck at the near-linear barrier.
One important aspect of our work is, from the theory perspective, that it breaks this threshold and thereby opens up a whole new unexplored domain of research questions.

\paragraph{Low-Memory $\mpc$ Model}
We study a more realistic regime of the parameters for problems on large graphs, captured by the following \emph{low-memory $\mpc$ model}. 
%We thus raise the question of what can be done if we are more conservative with the space allocation, and restrict every machine to have sublinear in $n$ memory. Goal is to have subliner space, sublinear number of machines and is willing to pay a little bit more in the number of machines.  In particular, we suggest the study of the following 
% \vspace{2pt}
% \begin{mdframed}[backgroundcolor=gray!20,topline=false,
%   rightline=false,
%   leftline=false,bottomline=false] 
% \textbf{Low-Memory $\mpc$ Model for Graph Problems:}

% \noindent Given $M=O\left(n^{1-\eps'}\right)$ machines with local memory $S=O\left(n^{\eps}\right)$ each, for $0<\eps'\leq\eps<1$, we raise the question of what problems can be solved efficiently---that is, in $\poly \log \log n$ rounds---on graphs with $n$ nodes.
% \end{mdframed}
% \vspace{2pt}   

\vspace{2pt}
\begin{mdframed}[backgroundcolor=gray!20,topline=false,
  rightline=false,
  leftline=false,bottomline=false] 
\textbf{Low-Memory $\mpc$ Model for Graph Problems:}

\noindent The input is a graph $G = (V, E)$ with $n$ nodes and $m$ edges of size $N=\widetilde{O}(n+m)$. Given $M=\widetilde{O}\left(\frac{N^{1+\alpha'}}{S}\right)$ machines with local memory $S=\widetilde{O}\left(n^{\eps}\right)$ each, for arbitrary constants $\eps>0$ and $\alpha'\geq 0$, we raise the question of what problems on $G$ can be solved efficiently---that is, in $\poly \log \log n$ rounds.
\end{mdframed}
\vspace{2pt}   

%We ask the question of what graph problems be computed efficiently (meaning say $\poly \log \log n$ rounds) in that model. 
 %,
Note that for sparse graphs, this condition exactly matches the sublinear memory constraint, and hence does not allow a trivial solution, as opposed to the setting with linear memory. We point out that low memory variants of the MPC model have been studied before \cite{pietracaprina2012space,ceccarello2015space}, resulting in $O(\log n)$-round algorithms for a variety of problems. For many of the fundamental graph problems, however, $O(\log n)$ is often particularly easy to achieve, for instance by directly adopting $\mathsf{LOCAL}$ algorithms.
We thus restrict our attention to ``efficient'' algorithms, which we define to be a $\poly \log \log n$ function, given that the state-of-the-art algorithms in the $\mpc$ model tend to end in this regime of round complexities. 
%It could be the case, however, that there are also other relevant definitions of efficient in our model.\
%Moreover,  
%However, there could be other relevant definitions of efficient in our model. 
Note that no general super-constant lower bounds are known \cite{roughgarden2016shuffles}. %Kuhn, Ghaffari, and Uitto
%, and in fact not expected to be known due to circuit lower bounds . 
%%%%%%%%%%%Still check
%\matodo{does this need more discussion?}
%%%%%%%%%%%%%%%%%%%%%%

\paragraph{Concurrent Related Work} Until very recently, $\mpc$ research had focused on linear-memory  algorithms.
After (a preliminary version of) this work, the low-memory setting gained a lot of attention. This led to a variety of new results for graph problems in this model. We briefly outline recent developments that have taken place after this work. In a follow-up work, \cite{bfukarp} devise $\mis$ and matching algorithms in uniformly sparse graphs in $O(\log^2 \log n)$ rounds. 
In independent concurrent works, Ghaffari and Uitto \cite{GU18} and Onak \cite{DBLP:journals/corr/abs-1807-08745} provide algorithms for the problems of maximal independent set and matching in general graphs in $\widetilde{O}(\sqrt{\log n})$ rounds.
In \cite{lowmemcol}, Chang et al. develop an $O(\sqrt{\log \log n})$-round low-memory $\mpc$ algorithm for $(\Delta+1)$-list coloring. 

%For the problem of connected components
%Ghaffari, Kuhn, and Uitto \cite{gku19} prove conditional lower bounds of $\Omega(\log \log n)$ for the problems of maximal independent set and matching in general graphs, conditioned on the non-existence of a $o(\log n)$-round low-memory $\mpc$ algorithm for the connected components problem. While an $O(\log n)$-round low-memory $\mpc$ algorithm for connected components is known \cite{Chitnis2013CCMapReduce,DBLP:journals/corr/abs-1807-10727}, $o(\log n)$ is conjectured to be impossible \cite{DBLP:conf/icml/YaroslavtsevV18}. 

%e focus on sparse graphs only. \matodo{explain a little bit why we focus on sparse}

\begin{remark}\label[remark]{smallerThanDelta}
%Furthermore,
%We wish to point out that the additive $\Delta$ term in the model definition is to make the model and our solution cleaner and more readable in our setting.
If a node cannot be stored on a single machine, as its degree is larger than $S$, one has to introduce some sort of a workaround, e.g., have several smaller-degree copies of the same node on several separate machines.
In the end of \cref{sec:overview}, we argue how to get rid of this issue, in our problem setting, by a clean-up phase in the very beginning. To make the statements and arguments more readable, %cleaner and more readable, %in our setting.
%round make our approach work without this simplifying assumption.
we throughout think of this clean-up as having taken place already. Instead, one could also work with the simplifying assumption that every machine has $S=\widetilde{O}\left(n^{\eps}+\Delta\right)$ memory, so that this issue does not arise in the first place. 
\end{remark}

%\begin{remark}

\subsection{Limitations of Linear-Memory $\mpc$ Techniques}\label[section]{sec:MPCTechniques}
%\matodo{Barrier at Linear Memory Saaqmpling does not work. It is not clear whether dense should reallyy be harder}
%super sparse, e.g. a tree, it will never fit on one machine. 
%all problems would become trivial.  
%as locally solving huge parts of graph problem is not allowed anymore, 
%The techniques developed so far for $\mpc$ seem to get stuck there, hit a boundary. In the following, we briefly overview the main techniques. 
%However, none of these methods seem to go below. 
%\paragraph{New Technique for Low-Memory $\mpc$}
%We pursue the question of what is doable with truly sublinear space, also if one does not artificially restrict attention to dense graphs in order to make memory superlinear in number of nodes. 
%This is why we focus on sparse graphs first, and see what is possible under the exact restricitons that were introduces in MapReduce Model to be reasonable. TRuly sublinear space, truly sublinear number of machines, in the input, and hence also in the number of vertices. Interesting because it doesn’t impose restriction of dense graphs, and sparse graph setting exactly matches Karloff. 
In the following, we briefly overview recent techniques from the world of Massive Parallel Computation algorithms, and give some indications as to why they are likely to fail in the low-memory setting. 
The restriction to substantially sublinear memory, to the best of our knowledge, indeed rules out all the known $\mpc$ techniques, which seem to hit a boundary at roughly $S=\widetilde{\Omega}(n)$: moving from essentially linear to significantly sublinear memory incurs an exponential overhead in their round complexity, 
%when significantly going below linear memory, thus     
 regardless of the density of the graph. 
This blow-up in the running time gives rise to the question of to what extent this near-linear memory is necessary for efficient algorithms.

%In the end of this section, we hint at a direction that might be promising to pursue in order to devise efficient $\mpc$ algorithms in the substantially sublinear memory regime. 
%We refer to \Cref{}, for a short overview over state of the art techniques.
%even if the graph is very sparse.

% \matodo{maybe add a sentence here that not natural barrier. as our results shows? some fancy marketing. maybe need some statement somewhere that weird that memory depends on density of graph. kind of. }

% \subsubsection{Simulating Parallel/Distributed Algorithms}
\paragraph{(Direct) PRAM/LOCAL Simulation}
One easy way of devising $\mpc$ algorithms is by shoehorning parallel or distributed %PRAM/LOCAL 
algorithms into the $\mpc$ setting. 
% For some simple LOCAL algorithms, this can be done in a straight-forward manner, leading to the same asymptotic round complexity. In general, however, this does not seem to be a good choice, as is does not do justice to/does not make full use of the power of the $\mpc$ model (i.e., global communication), which might help to significantly speed up the LOCAL algorithms. 
 %and low-degree graphs (which does not lead to such a high overhead in space), it is not a good choice to directly adopt these algorithms.  
For not too resource-heavy PRAM algorithms, there is also a standard simulation technique \cite{karloff2010model,goodrich2011sorting} that automatically transforms them into $\mpc$ algorithms.
% which is possible for not too resource-heavy PRAM algorithms. 
%is possible under only mild assumptions about the resources of the PRAM algorithm, or, for simpler PRAM algorithms, directly adapting/direct application them to $\mpc$ setting (such as e.g. Luby)/to shoehorn message passing style algorithms into the framework. 
This approach, however, suffers from several shortcomings. First and foremost, the reduction leads to an $\Omega(\log n)$ round complexity, which is exponentially above our efficiency threshold.

\paragraph{Round Compression}
Another similar technique, called \emph{round compression}, introduced by Assadi and Khanna \cite{AssadiK17,Assadi2017}, provides a generic way of compressing several rounds of a distributed algorithm into fewer $\mpc$ rounds, resulting in an (almost) exponential speed-up. However, this method heavily relies on storing intermediate values, leading to a blow-up of the memory. In particular, when requiring the algorithm to run in $\poly \log \log n$ rounds, superlinear memory per machine seems inevitable. 

 %In particular, Khanna and Assadi showed how to perform $O(\log n)$ rounds of a distributed algorithm in constant many $\mpc$ rounds, using $O(n\sqrt{n})$ space per machine, and Assadi proposed a transformation from $O(\log n)$ rounds of distributed algorithms into $\poly \log \log n$ $\mpc$ rounds, resulting in $o(\log n)$-round algorithms for linear (or mildly sublinear) space per machine, and
%implement a super-constant number of these phases in only a constant number of $\mpc$ rounds
%,  
%leading to an almost exponential speed-up \Cref{Assadi},  which . 

%Or Khanna and Assadi: allowed for compressing $O(\log n)$ rounds of a distributed algorithm into
%constant $\mpc$ rounds when the memory allowed per machine is . 

% \subsubsection{Sampling}
\paragraph{Filtering}
The idea of the \emph{filtering} technique~\cite{lattanzi2011filtering,kumar2015fast} is to reduce the size of the input by carefully removing a large amount of edges from the input graph that do not contribute to the (optimal) solution of the problem.
This reduction is done by either randomly sampling the edges, or by deterministically choosing sets of relevant edges,
so that the resulting (partial) problem instance fits on a single machine, and hence can be solved there. This requires significantly superlinear memory, or logarithmically many rounds if memory is getting close to the linear regime. Moreover, the approach seems to get stuck fundamentally at $S=\widetilde{\Omega}(n)$, since it relies on one machine eventually seeing the whole (filtered) graph. 
%%%%%%%%%%%Still check
%\matodo{do we have to add: For some problems, however, the filtering technique leads to efficient $O(1)$-round algorithms \cite{lattanzi2011filtering}.} 
%%%%%%%%%%%%%%%%%%%%%%
%and for sure is stuck at linear, since it relies on one machine eventually seeing  

%Given significantly superlinear memory, very efficiently, e.g. constnat rounds, given $\widetilde{O}(n)$ rounds have $\widetilde{O}(\log n)$ rounds. PRAM simulations will take $\Theta(\log n)$ rounds. 
%
 %Specifically, for graphs with at most $n^{1+c}$
%edges and machines with memory at least $n^{1+\eps}$ our algorithm will
%require $O(c/\eps)$ rounds. If the machines have memory $O(n)$ then
%our algorithm requires $O(log n)$ rounds.
%approaches seem to get stuck fundamentally at roughly
%$O(\log n)$ rounds once we enter the near-linear memory regime. I

\paragraph{Coresets}
% \matodo{add somewhere that purely edge-0sampling based approaches are doomed anyways for sparse graphs, since with any reasonable subsampling probability one does not expect any edges on a machine. }
One very recent and promising direction for $\mpc$ graph algorithms is the one of (randomized composable) coresets \cite{AssadiK17,assadi2017coresets}, in some sense building on the filtering approach. The idea is that not all the information of the graph is needed to (approximately) solve the problem. One thus can get rid of unimportant parts of the information. %, just getting down to the core. 
Solving the problem on this core, one then can derive a perfect solution or a good approximation to it, at much lower cost. This solution, however, is found by loading the coreset (or parts of it) on one machine, and then locally computing a solution, which again seems to be stuck at $S=\widetilde{\Omega}(n)$, for similar reasons as the filtering approach.   
%This really requires the coreset to be of size at most $S$. Coresets however, seem to have linear size. 
% \matodo{IS THAT CORRECT? I guess no. Needs sanity check here. }

%More generally, since Chernoff bounds for single nodes do not apply anymore, any sampling-based approach is in trouble in the case of low-degree graphs.

%\paragraph{Inherent Locality  and Local Algorithms}
%\subsection{Technique}

%\vspace{1cm}
\subsection{Local Techniques for Low-Memory $\mpc$}
In this section, we propose a direction that seems to be promising to pursue in order to devise efficient $\mpc$ algorithms in the substantially sublinear memory regime.

\paragraph{Inherent Locality  and Local Algorithms}
The low-memory $\mpc$ model, as compared to the traditional $\mpc$ graph model and the streaming setting, suffers from inherent locality: Since the memory of a single machine is too small to fit all the nodes simultaneously, it will never be able to have a global view of all the nodes in the graph. When devising techniques, we thus need to deal with this intrinsic local view of the machines. It seems natural to borrow ideas from local distributed graph algorithms, which are designed exactly to cope with this locality restriction. A direct simulation, however, in most cases only results in  $\Omega(\poly \log n)$-round algorithms. The problem is that these algorithms do not make use of the additional power of the $\mpc$ model, the global all-to-all communication, as the communication in those message-passing-based models is restricted to neighboring nodes. 

\paragraph{Local Meets Global}
We propose a strikingly simple technique to enhance local-inspired approaches with global communication, in order to arrive at efficient algorithms in the world of low-memory $\mpc$ which are exponentially faster than their local counterparts and whose memory requirements are polynomially smaller per machine than their traditional $\mpc$ counterparts. We describe this technique in the context of the $\mis$ problem on trees, even though it is more general.

\subsection{Our Results}
 
%enables going below this essentially linear barrier. 
%\paragraph{Low-Memory $\mpc$ Algorithms for Sparse Graphs}
%\subsubaqsection{Innovative Solution, new approaches}

% of requiring linear  (or almost linear, in the sense of just slightly sublinear) space. 
%the MapReduce framework
In this paper, we focus on the \emph{Maximal Independent Set ($\mis$)} problem, one of the most fundamental local graph problems. We propose efficient and surprisingly simple algorithms for the case of trees, which is particularly interesting for the following reason. While trees admit a trivial solution in the linear-memory MPC model, this cheat will not work in our low-memory setting. In some sense, it thus is the easiest non-trivial case, which makes it the most natural starting point for further studies. 
%%%%%%%%%%%Still check
%\matodo{needs other phrasing. do not know how.}
%%%%%%%%%%%%%%%%%%%%%%
In fact, we strongly believe that our techniques can be extended to more general graph families and problems\footnote{Indeed, there is a follow-up work generalizing our approach from trees to uniformly sparse graphs and from $\mis$ only to $\mis$ and maximal matching \cite{bfu,bfukarp}.}. 

 %on trees, in the low-memory $\mpc$ model.
%We concretely give new insights about appraoches for the $\mis$ problem in case of trees.
%The msot natural approach is to study difference on trees: While they can be trivially solved with linear space, none of these cheating tricks will work on the low-memory $\mpc$ model. New approaches, new strategies are needed. 
%We give new approaches that go below this barrier of requiring linear (or almost linear, in the sense of slightly sublinear) space. 
%\subsubsection{$\mis$}
%In particular, we look at trees, input is of size $O(n)$, we give every machine sublinear, or  even just subpolynomial, storage, and sublinear number of machines. What is doable
%We show that for instance $\mis$ can be solved in $O(\log \log n)$ rounds, employing our new technique. 
%THEOREM

We provide two different efficient algorithms for $\mis$ on trees. Our first algorithm in \Cref{thm} is strikingly simple and intuitive, but comes with a small overhead in the total memory of the system, meaning that $M\cdot S$ is superlinear in the input size $n$. 
%Our algorithm leads to a small overhead in the total memory of the system
\begin{theorem}\label{thm}
There is an $O(\log^2 \log n)$-round $\mpc$ algorithm that w.h.p.\footnote{As usual, w.h.p.\ stands for \emph{with high probability}, and means with probability at least $1-n^{-c}$, for any $c\geq 1$.}\ computes an $\mis$ on $n$-node trees in the low-memory setting, that is, with $S=\widetilde{O}\left(n^{\eps}\right)$ local memory on each of $M=\widetilde{O}\left(n^{1-\eps/3}\right)$ machines, for any $0< \eps <1$. %and a $0<\delta < \eps$ small enough. 
\end{theorem}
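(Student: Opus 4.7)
The plan is to combine a shattering phase with the graph exponentiation technique in order to leverage MPC's all-to-all communication. After the preliminary clean-up from \Cref{smallerThanDelta} (which ensures that every node together with its neighborhood fits on a single machine), I would run $O(\log \log n)$ iterations of a sampling-based candidate procedure tailored to trees: each undecided node independently becomes a \emph{candidate} with a carefully tuned probability (depending on its current effective degree), and commits to the $\mis$ exactly when no neighbor is also a candidate; committed nodes and their neighbors are then removed. A shattering analysis tailored to trees should show that, with high probability, the graph induced by the remaining undecided nodes decomposes into connected components of size $\poly \log n$.

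Next, I would gather each residual small component onto a single machine via graph exponentiation: each surviving node iteratively doubles the radius of its stored neighborhood. Because a residual component has size $k = \poly \log n \ll n^{\eps}$, it both fits in one machine's memory and is collected within $O(\log k) = O(\log \log n)$ MPC rounds. A machine that holds its entire component then computes $\mis$ on it locally and commits its answer.

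For the round count, the collection step costs $O(\log \log n)$ MPC rounds. Each of the $O(\log \log n)$ sampling iterations in the shattering phase itself requires aggregating local information over growing tree neighborhoods, which in the low-memory regime is implemented via $O(\log \log n)$ rounds of exponentiation/aggregation per iteration; this gives the overall $O(\log \log n) \cdot O(\log \log n) = O(\log^2 \log n)$ bound. The total memory is dominated by the collection phase, which temporarily replicates each node's information across neighborhoods of size $n^{\eps}$; summing over the $M = \widetilde{O}(n^{1-\eps/3})$ machines of size $S = \widetilde{O}(n^{\eps})$ gives $M \cdot S = \widetilde{O}(n^{1+2\eps/3})$, consistent with the mildly superlinear total memory foreshadowed before the theorem.

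The main obstacle will be proving a sufficiently strong shattering bound for this setting: the classical shattering argument gives $O(\log \Delta)$ rounds, which is $\Theta(\log n)$ in the worst case and thus exponentially too weak. Establishing that on trees $O(\log \log n)$ iterations already suffice—while simultaneously ensuring that each iteration is faithfully implementable in $O(\log \log n)$ rounds of low-memory $\mpc$ via exponentiation, and that the replicated neighborhoods never blow the per-machine budget $n^{\eps}$—is where the delicate part of the argument will lie.
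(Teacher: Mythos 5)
There is a genuine gap, and it is exactly the one you flag at the end of your proposal: you assume that $O(\log \log n)$ iterations of a degree-dependent candidate procedure (essentially Luby/Ghaffari-style sampling) shatter a tree into $\poly \log n$-size components, but no such bound is available and your sketch offers no mechanism that would produce one. The known shattering guarantee for such local procedures is $O(\log \Delta)$ iterations, and on a tree $\Delta$ can be polynomial in $n$, so this is $\Theta(\log n)$ in the worst case. Your remark that each iteration ``aggregates information over growing neighborhoods via exponentiation'' does not help: a Luby iteration is a purely local (constant-radius) step once each node's neighborhood fits on a machine, and replaying it over larger radii does not change which nodes join the independent set, so the extra $O(\log\log n)$ rounds per iteration buy nothing toward shattering. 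Without a new idea here, the high-degree regime is simply not handled.

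The paper's resolution is a separate, genuinely non-local \emph{degree-reduction} phase that precedes local shattering, and this is the ingredient missing from your proposal. One subsamples nodes with probability roughly $\Delta^{-1/(1+\eps)}$; because the graph is a tree, every surviving path of length $\omega(\log_\Delta n)$ dies w.h.p., so the induced subgraph has components of diameter $O(\log_\Delta n)$ and size $O(n^{\eps/3})$. These components are \emph{gathered} onto single machines (this is where exponentiation and all-to-all communication are actually used), each is $2$-colored uniformly at random, and one color class joins the independent set. A node of degree $\Omega(\Delta^{1/(1+\eps)+\Omega(1)})$ has $\poly\log n$ subsampled neighbors lying in pairwise distinct components (triangle-freeness), hence colored independently, so w.h.p.\ it acquires an MIS neighbor and is removed. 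Each such round drops $\Delta$ polynomially, so $O(\log\log \Delta)$ iterations of a $O(\log\log n)$-round gathering step bring the degree down to $\poly\log n$; only then is the local shattering you describe applied, now legitimately costing $O(\log \Delta') = O(\log\log n)$ rounds, followed by the gather-and-solve post-shattering step you correctly describe. Your round-count and total-memory bookkeeping ($M\cdot S = \widetilde{O}(n^{1+2\eps/3})$ from the naive gathering) match the paper, but the argument as proposed does not go through without the subsample-and-conquer degree reduction.
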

%\todo{add a theorem that states that with $\log \log n$ factor in the runtime, we get ``in space''.}

Our second algorithm in \Cref{thm2} gets rid of this overhead at the cost of a factor of $\log \log n$ in the running time.

\begin{theorem}\label{thm2}
There is an $O(\log^3 \log n)$-round $\mpc$ algorithm that w.h.p.\ computes an $\mis$ on $n$-node trees in the low-memory setting, that is, with $S=\widetilde{O}\left(n^{\eps}\right)$ local memory on each of $M=\widetilde{O}\left(n^{1-\eps}\right)$ machines, for any $0< \eps <1$. %and a $0<\delta < \eps$ small enough. 
\end{theorem}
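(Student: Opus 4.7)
The plan is to derive Theorem~\ref{thm2} from Theorem~\ref{thm} by invoking the latter in $O(\log \log n)$ successive phases on doubly-exponentially shrinking residual graphs, so that the superlinear total-memory overhead of a single invocation is charged only against a polynomially smaller instance and therefore fits into the tight linear global budget of $M\cdot S = \widetilde O(n)$.

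Concretely, I would maintain a sequence of phases $i=0,1,\ldots,O(\log\log n)$, where $G_i=(V_i,E_i)$ is the subgraph induced by the still-undecided vertices, with $|V_0|=n$ and the invariant $|V_i|\le n^{(2/3)^i}$. At the start of phase $i$, a constant number of standard MPC sorting/routing rounds suffice to compact $G_i$ onto a subset of the machines in such a way that it simulates the setting of Theorem~\ref{thm} on an $|V_i|$-node input: each machine still has $n^{\epsilon}\ge |V_i|^{\epsilon}$ local memory, and the $|V_i|^{1-\epsilon/3}$ machines used can draw upon a total memory of $|V_i|^{1+2\epsilon/3}\le n$ whenever $i\ge 1$, i.e., within the global budget available in the model of Theorem~\ref{thm2}.

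Within phase $i$ I would run the $O(\log^2\log n)$-round algorithm of Theorem~\ref{thm} on $G_i$ but truncate it the moment the residual undecided graph has shrunk to at most $|V_i|^{2/3}$ vertices, producing $V_{i+1}$ and thereby reestablishing the invariant for phase $i+1$. Summing over the phases yields $O(\log\log n)\cdot O(\log^2\log n)=O(\log^3\log n)$ rounds, and after $O(\log\log n)$ phases we reach $|V_i|=O(1)$, at which point the remaining MIS is computed in a single round. Correctness follows from the correctness of Theorem~\ref{thm} on each residual subgraph together with the observation that decisions made in earlier phases are consistent with any MIS extension on the residual graph.

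The main obstacle is extracting the intermediate shrinkage statement: Theorem~\ref{thm} as stated only guarantees a complete MIS at the end, not a polynomial decrease of the survivor count after a prefix of its execution. Overcoming this requires opening up the proof of Theorem~\ref{thm} and arguing that its core ``local-meets-global'' iteration in fact shatters the graph by a polynomial factor with high probability in every block of $O(\log^2\log n)$ rounds, so that stopping the algorithm as soon as $|V_i|^{2/3}$ survivors remain is well-defined and happens within the allowed round budget. Once this quantitative progress lemma is in hand, the remaining ingredients (routing between phases, reconciling the cleanup phase of Remark~\ref{smallerThanDelta}, and the memory accounting above) are routine MPC bookkeeping.
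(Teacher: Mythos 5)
Your reduction does not close the gap it needs to close, and the gap appears at phase $0$. The superlinear total memory in \Cref{thm} is not an artifact of how the algorithm terminates; it is incurred \emph{during} execution, in the very first subsample-and-conquer iteration, when the connected components of the sampled subgraph are gathered via the naive Hash-to-Min-style procedure: a component of size $k$ blows up to $\Theta(k^3)$ virtual edges (\Cref{lemma:componentsize}), giving total memory $\approx n^{1+2\eps/3}$ already while the undecided graph still has $n$ nodes. Your own accounting concedes that the budget $|V_i|^{1+2\eps/3}\le n$ only holds for $i\ge 1$; but the $i=0$ invocation must run on the full $n$-node tree, and truncating it early does not reduce the memory it has already consumed. (Even for $i\ge 1$ the inequality $n^{(2/3)(1+2\eps/3)}\le n$ fails once $\eps>3/4$.) In addition, the ``quantitative progress lemma'' you defer is false as stated: the iterated subsample-and-conquer reduces the \emph{maximum degree}, not the number of undecided vertices. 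On a bounded-degree tree (e.g., a path) the degree-reduction phase is vacuous and no prefix of the algorithm shrinks the survivor count by a polynomial factor; the vertex count only collapses in the post-shattering step, which itself calls the memory-hungry gathering routine. So the invariant $|V_{i+1}|\le |V_i|^{2/3}$ cannot be established by opening up the proof of \Cref{thm}.

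The paper takes a different and more local route: \Cref{thm,thm2} share the identical shattering/post-shattering structure, and the only difference is which variant of the Gathering Lemma (\Cref{lemma:hashToMin}) is plugged in. Part b) replaces the naive clique-completion by an \emph{in-space} gathering algorithm that first roots each tree with a budgeted graph-exponentiation process (each node may add at most $B_i$ edges per phase, all added edges are deleted at the end of the phase, and the per-node budget is rescaled as nodes are removed), so that the total memory stays $O(n\cdot d^3)=\widetilde O(n)$ throughout, at the cost of an extra $O(\log\log n)$ factor in rounds (\Cref{lemma:rootingruntime}). That is where the third $\log\log n$ in \Cref{thm2} comes from --- not from iterating the algorithm of \Cref{thm} over shrinking instances. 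If you want to salvage your phase-based idea, you would still need a linear-total-memory gathering primitive for the first phase, which is exactly the content of the paper's Section on in-space gathering.
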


The algorithms in \Cref{thm,thm2} almost match the conditional lower bound of $\Omega(\log \log n)$ for $\mis$ (on general graphs) due to Ghaffari, Kuhn, and Uitto \cite{gku19}, which holds unless there is an $o(\log n)$-round low-memory $\mpc$ algorithm for connected components. This, in turn, is believed to be impossible under a popular conjecture \cite{DBLP:conf/icml/YaroslavtsevV18}. 

Our algorithms improve almost exponentially on the $\widetilde{O}(\sqrt{\log n})$-round low-memory $\mpc$ algorithms in concurrent works---for bounded-arboricity by Onak \cite{DBLP:journals/corr/abs-1807-08745} and for general graphs by Ghaffari and Uitto \cite{GU18}---as well as on the algorithms directly adopted from the $\mathsf{PRAM}$/$\mathsf{LOCAL}$ model: an $O(\log n)$-round algorithm for general graphs due to Luby \cite{luby1986simple} and independently Alon, Babai, Itai \cite{alon1986fast}, and the $O(\sqrt{\log n} \cdot \log \log n)$-round algorithm for trees by Lenzen and Wattenhofer \cite{lenzen2011mis}. Note that for rooted trees, the $\mathsf{PRAM}$/$\mathsf{LOCAL}$ algorithm by Cole and Vishkin \cite{ColeVishkin} directly gives rise to an $O(\log^* n)$-round low-memory $\mpc$ algorithm.

Moreover, our result shows that the local  memory can be reduced substantially from $\widetilde{\Omega}(n)$ to $n^{\eps}$ or even $n^{1/\poly \log \log n}$ (see \Cref{cor}) while not incurring a significant loss in the round complexity, compared to the recent $O(\log \log n)$-round $\mis$ algorithm by Ghaffari et al.~\cite{MPCMIS}. 

%, which both are directly adopted from the corresponding PRAM/LOCAL algorithms. 
%PRAM and LOCAL algorithms 
%for $\mis$ on general graphs by Luby and on trees by Lenzen and Wattenhofer can be directly translated to 
%Luby's algorithm that works in the PRAM/LOCAL model directly gives rise to an $O(\log n)$-round algorithm (for general graphs) in the $\mpc$ model, also in the low-memory setting. For trees specifically, there is another LOCAL/PRAM algorithm due to Lenzen and Wattenhofer that works in $O(\sqrt{\log n})$, and that can be directly simulated in the MPC model as well. We are improving exponentially over this result. 
%Moreover, compared to the $\mis$ algorithm by Ghaffari et al.~\cite{MOHSEN}, it reduces the memory from $\frac{n}{\poly \log n}$ to $n^{\eps}$ while not incurring any loss in the round complexity.
 %. 
%There is an $\mis$ algorithm in $O(\log \log n)$ rounds on general graphs, but that really makes use of the power of having  memory .
%\subsubsection{Related Work}
%Best known results: in $\mpc$ with superlinear space/slightly sublinear. In LOCAL.

Throughout the paper, when we mention the low-memory $\mpc$ setting, we refer to the parameter range for $S$ as given in \Cref{thm,thm2}, that is, $S=\widetilde{O}(n^{\alpha})$, where $\alpha>0$ is an arbitrary constant. 
% i.e., $S=O\left(n^{\eps}\right)$ and $M=O\left(n^{1-\eps/3}\right)$, for any $0<\eps<1$. 
However, $\eps$ does not need to be a constant. Indeed, we can even go to subpolynomial memory $S=n^{o(1)}$.

\begin{corollary}\label[corollary]{cor} For any $\eps=\Omega\left(1/\poly\log\log n\right)$, an $\mis$ on an $n$-node tree can be computed on $M=\widetilde{O}\left(n^{1-\eps/3}\right)$ machines with $S=\widetilde{O}\left(n^{\eps}\right)$ local memory each in $O\left(\frac{1}{\eps} \cdot \log^2 \log n\right)$ MPC rounds. 
%\begin{corollary}\label{cor}
%There is an $O\left(\frac{1}{\eps} \cdot \log^2 \log n\right)$-round $\mpc$ algorithm that w.h.p.\ computes an $\mis$ on $n$-node trees with  $S=O\left(n^{\eps}\right)$ local memory on each of $M=O\left(n^{1-\eps/3}\right)$ machines, for $\eps=\Omega\left(1/\poly\log\log n\right)$. %$\eps = \Omega\left(1/\poly\log\log n\right)$. 
%\matodo{exact conditions}. 
%\end{corollary}
\end{corollary}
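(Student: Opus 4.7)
My plan is to reopen the proof of \Cref{thm} and make its dependence on $\eps$ explicit. In \Cref{thm} the parameter $\eps$ is treated as a fixed constant, so the stated $O(\log^2 \log n)$ bound silently absorbs constants that may depend on $1/\eps$; the goal is to show that this dependence is at most a single linear factor of $1/\eps$, yielding the claimed $O\paren{\frac{1}{\eps}\log^2 \log n}$ round complexity even when $\eps$ shrinks down to $\Omega(1/\poly\log\log n)$.

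First, I would check that the memory bound $S=\widetilde{O}(n^{\eps})$ remains large enough for the algorithm's basic primitives. Under the hypothesis $\eps=\Omega(1/\poly\log\log n)$ we have $n^{\eps}=2^{\log n/\poly\log\log n}$, which is superpolylogarithmic in $n$; in particular $S$ comfortably accommodates node IDs, $\poly\log n$-sized neighborhood digests, and the polylogarithmic overhead hidden inside the $\widetilde{O}$.  So no part of the algorithm is invalidated by the shrinking of $\eps$.

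Next, I would inspect the algorithm of \Cref{thm} macro-phase by macro-phase.  Such low-memory $\mpc$ algorithms typically proceed by iteratively shrinking a progress measure (a maximum remaining degree, or the depth of a tree decomposition) by a factor of roughly $n^{\Theta(\eps)}$ per macro-phase; since the initial measure is $\poly(n)$ and the algorithm terminates once the residual instance fits into the memory $n^{\eps}$ of a single machine, only $O(1/\eps)$ macro-phases are required.  Each macro-phase executes the same fixed subroutine---morally, $O(\log\log n)$ graph-exponentiation doubling steps followed by a local computation---whose round complexity is $O(\log^2\log n)$ and is \emph{independent} of $\eps$.  Multiplying gives the claimed round complexity, while the machine count $M=\widetilde{O}(n^{1-\eps/3})$ and per-machine memory $S=\widetilde{O}(n^{\eps})$ are inherited verbatim from \Cref{thm}.

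The main obstacle is verifying that no per-phase subroutine secretly hides an additional factor of $\poly(1/\eps)$.  The most suspicious step is graph exponentiation: gathering radius-$r$ neighborhoods costs $O(\log r)$ rounds, and since we can only afford radii of order $n^{\Theta(\eps)}$, this step contributes $O(\log(\eps\log n))=O(\log\log n)+O(\log(1/\eps))$ rounds per doubling block.  The extra $O(\log(1/\eps))$ summand is absorbed into the $O(\log^2\log n)$ factor because $1/\eps\leq\poly\log\log n$, so $\log(1/\eps)=O(\log\log\log n)$.  Once this is checked for each subroutine appearing in the proof of \Cref{thm}---including any sampling/alive-nodes accounting that already uses w.h.p.\ concentration with $n^{\eps}$ as the base---the corollary follows by straightforward bookkeeping over the $O(1/\eps)$ macro-phases.
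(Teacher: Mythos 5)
Your conclusion is correct and your overall bookkeeping strategy---isolate the $\eps$-dependence in the number of phases, check that everything else is $\eps$-independent up to absorbable $\log(1/\eps)=O(\log\log\log n)$ terms, and confirm that $S=\widetilde{O}(n^{\eps})$ remains superpolylogarithmic---is exactly the right one. However, you have misidentified the mechanism that produces the $1/\eps$ factor, and the phase structure you propose to ``inspect'' is not the one the algorithm of \Cref{thm} actually has. There is no progress measure that shrinks by a factor of $n^{\Theta(\eps)}$ per macro-phase over $O(1/\eps)$ macro-phases of $O(\log^2\log n)$ rounds each. Instead, the algorithm iterates the subsample-and-conquer step of \Cref{lemma:degdrop}: each iteration samples nodes with probability $\Delta^{-\delta}$ and reduces the maximum degree from $\Delta$ to roughly $\Delta^{1/(1+\eps)}$, at a cost of $O(\log\log n)$ rounds (dominated by the component gathering of \Cref{lemma:hashToMin}, since the sampled components have diameter $O(\log_{\Delta} n)$). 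The $\eps$-dependence enters because the sampled components must have size $n^{O((1-\delta)/\delta)}=O(n^{\eps/3})$ to fit into a single machine (\Cref{claim:smallSampled}), which forces $\delta=\Theta(1/(1+\eps))$ and hence caps the per-iteration degree drop at a $(1+\Theta(\eps))$-th root. Consequently the iteration count is $\log_{1+\eps}\log\Delta=O\paren{\frac{1}{\eps}\log\log\Delta}$ (\Cref{lemma:degreeAndDiam}), not $O(1/\eps)$, and the product $\frac{1}{\eps}\log\log\Delta\cdot\log\log n=O\paren{\frac{1}{\eps}\log^2\log n}$ is what gives the corollary. Your decomposition happens to multiply out to the same bound, but a verification carried out under your model of the phases would not go through as described.

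Two smaller points you should also fold in to make the argument complete for $\eps=1/\poly\log\log n$. First, \Cref{claim:smallSampled}~iii) needs high-degree nodes to retain $\Omega(\Delta^{\delta'\delta})=\Omega(\poly\log n)$ sampled neighbors with $\delta'=\Theta(\eps)$; this requires $\Delta^{\Theta(\eps)}=\Omega(\poly\log n)$, so the degree reduction now bottoms out at $\Delta'=(\log n)^{\Theta(1/\eps)}$ rather than at $\poly\log n$. Second, the subsequent local shattering then runs in $O(\log\Delta')=O(\frac{1}{\eps}\log\log n)$ rounds and leaves components of size $\poly(\Delta')\cdot\log n=(\log n)^{O(1/\eps)}$, whose gathering in the post-shattering step costs $O(\frac{1}{\eps}\log\log n)$ rounds; one must check that these components still have size $O(n^{\eps/3})$, which holds precisely because $\eps=\Omega(1/\poly\log\log n)$. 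Both contributions stay within $O(\frac{1}{\eps}\log^2\log n)$, so the corollary survives, but they are part of the $\eps$-dependence and are not covered by your ``each phase is $\eps$-independent'' accounting.
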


\subsection{Our Approach in a Nutshell}
%Our $\mis$ algorithm consists of three main parts.
 In the following, we give a short (and slightly imprecise) sketch of the steps of our algorithm. Our approach is based on the \emph{shattering} technique which recently has gained a lot of attention in the $\local$ model of distributed computing \cite{barenboim2016locality} and goes back to the early nineties \cite{beck1991LLL}. The idea of \emph{shattering} is to randomly break the graph into several significantly smaller components by computing a partial solution. The problem on the remaining components then is solved by a \emph{post-shattering} algorithm. 
%Most of the recent developments in are based on the shattering technique, which randomly breaks down the graph into several
%small components, typically of exponentially smaller size, and then applies a deterministic algorithm
%to each of these components separately
%We remark that this lemma
%or its variants are key ingredients in Beck’s LLL method [8], sometimes referred to as the shattering
%lemma, and analogues of it appear in

%\subsection{Our Approach in a Nutshell}

%We will repeatedly make us of the following \emph{gather} method, which allows us to gather connected components of diameter $d$ in $O(\log d)$ rounds on one machine, ... . We make this more formal later. 
 %but before, we introduce one main technical tool to gather connected components, which we will use repeatedly.
%\begin{enumerate}[1.][wide,labelwidth=!, labelindent=0pt]
   %\item Shattering 
		       %\begin{enumerate}[wide,labelwidth=!, labelindent=15pt]
          %\item Global Shattering
					%\begin{enumerate}[wide,labelwidth=!, labelindent=30pt]
          %\item Subsampling
          %\item Random $\mis$
       %\end{enumerate} 
          %\item Local Shattering
       %\end{enumerate}  
   %\item Post-Shattering
%\end{enumerate}
%\paragraph{SHATTERING}
\subsection*{Shattering}
The goal of our \emph{shattering} technique is to compute an independent set such that after the removal of these independent set nodes and all their neighbors, the remaining graph, w.h.p., consists of components of size at most $\poly \log n$. This is done in two steps: first, the maximum degree, w.h.p., is reduced to $\poly \log n$ using the \emph{iterated subsample-and-conquer} method, and then a local shattering algorithm is applied to this low-degree graph.

\para{I) Degree Reduction via Iterated Subsample-and-Conquer}

%\paragraph{(3) Degree Drop and Low Degree $\mis$}
\noindent Our subsample-and-conquer method will w.h.p.~reduce the maximum degree of a graph polynomially, from $\Delta$ to roughly $\Delta^{1/(1+\eps)}$, as long as $\Delta =\Omega(\poly \log n)$.
After $O(\log_{1+\eps} \log \Delta)$ iterations, the degree of our graph drops to $\poly \log n$.
%Once we are down to polylogarithmic degrees, the Chernoff bounds we use do not work anymore and thus, we require a different approach.

%\begin{adjustwidth}{0.1cm}{}
%\paragraph{1.1.1) Subsampling}

%\noindent 

\paragraph{Subsample}
We sample the nodes independently with probability roughly $\Delta^{-\frac{1}{1+\eps}}$, where $\Delta$ is an upper bound on the current maximum degree\footnote{Note that in the $\mpc$ model it is easy to keep track of the maximum degree.} of the graph.
This subsampling step guarantees, roughly speaking, the following three very desirable properties of the graph $G'$ induced by the sampled nodes.
\begin{enumerate}[i)]
	\item The diameter of each connected component of $G'$ is bounded by $\bigO(\log_\Delta n)$.
	\item The number of nodes in each connected component of $G'$ is at most $n^{\eps/3}$.
	\item Every node with degree $\Delta^{1/(1+\eps)}$  or higher in $G$ has many neighbors in $G'$.
\end{enumerate}
 %\paragraph{1.1.2) Random $\mis$}
%\noindent
\paragraph{Conquer}
We find a random $\mis$ in all the connected components of $G'$ in parallel. This can be done by gathering the connected components\footnote{Gathering the connected components means loading all the nodes of a connected component onto the same machine.}, locally picking one of the two 2-colorings of this tree uniformly at random, and adding the black, say, nodes to the $\mis$. We will see that properties i) and ii) are crucial to ensure that the gathering can be done efficiently. In particular, storing the components on a single machine is possible due to the small size of the components, and the gathering is fast due to the small diameter.
%Note that due to property i), the gathering can be done in $\log \log n$ rounds, using our \emph{gather} method. \matodo{argue about space.}  
%$2$-coloring in parallel in all of the connected components.
%Then, we choose one of the color classes, at random, as an independent set in each component.
Because of property iii), the randomness in the choice of the $\mis$ in every connected component, as well as the tree structure, all high-degree nodes in the original graph (sampled or not), w.h.p., will have an adjacent independent set node and thus, are removed from the graph for the next iteration.

\para{II) Low-Degree Local Shattering}

\noindent Once the degree has dropped to $\Delta'=\poly \log n$, we apply the shattering part of the $\local$ $\mis$ algorithm of Ghaffari~\cite{Ghaffari-MIS}, which runs in $O(\log \Delta')=O(\log \log n)$ rounds and w.h.p.~leads to connected components of size $\poly \Delta' \cdot \log n=\poly\log n$ in the remainder graph. Observe that the simulation of this algorithm in the $\mpc$ model is straightforward.
%This algorithm \emph{shatters} the input graph into polylogarithmic size components in $\bigO(\log \Delta)$ time which translates to $\bigO( \log \log n )$ given that we are in the low degree case.

\subsection*{Post-Shattering}
We gather the connected components of size $\poly \log n$ and solve the remaining problem locally. 
%Since the connected components are guaranteed to be small, this can be done efficiently. 
%have size $\poly \log n$, in particular also their diameter is bounded by $\poly \log n$, which lets us perform the gathering using the \emph{gather} method in $O(\log \log n)$ rounds. 

%However, in our case, we can again find the remaining connected components by using Hash-to-All\jatodo{is this now the name we use?} in $\bigO(\log d) = \bigO(\log \log n)$ time, where $d$ denotes the diameter of the remaining components.
%We get the following corollary. 

%\begin{corollary}
	%Let $G$ be a graph with maximum degree $\Delta$.
	%There exists a $\mpc$ algorithm that uses at most $n^\eps$ memory per machine for any constant $\eps$ and that finds an $\mis$ in time $\bigO(\log \Delta)$.
	%In particular, the runtime becomes $\bigO(\log \log n)$ when $\Delta = \poly \log n$.
%\end{corollary}

%\end{adjustwidth}
\section{Algorithm Overview and Roadmap}\label[section]{sec:overview}
In this section, we give the formal statements we need to prove our main result, and provide an overview of the structure of the remainder of the paper.
%\paragraph{Gather}
We start with a result that is repeatedly used to gather all nodes of a connected component onto one machine, provided that they fit there. It will come in two variants, which naturally give rise to \cref{thm,thm2}, respectively. The proof is deferred to \Cref{sec:CC} (part a)) and the full version \cite{full} (part b)).
%\matodo{need to decide here how to handle it. State two lemmas here? CAn we abstract things away and just split lemma into two version in section 4?}
\begin{lemma}[Gathering]\label[lemma]{lemma:hashToMin}
Let $G$ be an $n$-node graph and $G'$ any $n'$-node subgraph of $G$ consisting of connected components of size at most $k=O\left(n^{\eps/3}\right)$ and diameter at most $d$. Then there are
\begin{enumerate}[a)]
\item an $O(\log d)$-round low-memory $\mpc$ algorithm with $M=\widetilde{O}\left(n^{1-\eps/3}\right)$ machines and 
\item an $O(\log d \cdot \log \log n)$-round low-memory $\mpc$ algorithm with $M=\widetilde{O}\left(n^{1 - \eps}\right)$ machines, if $n' \cdot d^3 = O( n )$, 
\end{enumerate}
%There is  low-memory $\mpc$ algorithm %with $M=O\left(n^{1-\eps/4}\right)$ machines with local memory $S=O\left(n^{\eps}\right)$ each 
%that adds a pointer from every node to the minimum ID node in its component.
%In particular, these pointers then enable us to determine the connected components in $O(1)$ additional rounds.   
that compute an assignment of nodes to machines so that all the nodes of a connected component of $G'$ are on the same machine.
 %gathers all nodes of a connected component on the same machine
\end{lemma}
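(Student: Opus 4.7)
The plan is to compute, for each connected component $C$ of $G'$, a common ``leader machine'' (indexed by the minimum node ID in $C$), and then send every node of $C$ to it. For part (a) the tool is standard graph exponentiation: every node $v \in V(G')$ maintains a set $B_v$ initialized to $N_{G'}[v]$, and in every subsequent round updates $B_v := \bigcup_{u \in B_v} B_u$, thereby doubling the radius of the ball it knows. After $\log d$ rounds $B_v$ equals the full component of $v$; $v$ reads off the minimum ID $\ell \in B_v$ and is then shuffled to the machine indexed by $\ell$.

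Memory-wise: $|B_v| \le k = O(n^{\eps/3})$ holds throughout, so a single expansion at $v$ fetches at most $|B_v|$ sets of size $\le k$, i.e.\ $O(k^2) = O(n^{2\eps/3})$ words. Summed over the $n' \le n$ active nodes this is $O(n^{1+2\eps/3})$ total traffic, which splits across $\widetilde{O}(n^{1-\eps/3})$ machines of capacity $\widetilde{O}(n^\eps)$ by standard hashing-based load balancing; each final component (size $\le k < n^\eps$) fits on a single destination machine, as required.

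For part (b) the budget tightens to $\widetilde{O}(n^{1-\eps})$ machines---total memory only $\widetilde{O}(n)$---so broadcasting entire balls $B_u$ is no longer affordable. The hypothesis $n' \cdot d^3 = O(n)$ suggests a much thinner representation: maintain only a parent-pointer/distance encoding of a BFS skeleton toward the current minimum-ID candidate (of size $\widetilde{O}(d^2)$ per node) and accelerate via pointer doubling along it. This still converges to the leader in $O(\log d)$ expansion phases, but redistributing the data to respect the tight memory cap at each phase costs an $O(\log \log n)$-round sort/scatter subroutine, giving the stated $O(\log d \cdot \log \log n)$ complexity.

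The main obstacle I anticipate is load-balancing the shuffle so that no machine's memory is exceeded when many nodes simultaneously request $B_u$ from the same popular $u$. Handling this by hash-based duplication of ``hot'' items and staged aggregation is the standard MPC routing trick for part (a); under part (b)'s far tighter memory cap it is precisely this redistribution step that seems to force the extra $\log \log n$ factor, and making the bookkeeping align with the $n' \cdot d^3 = O(n)$ budget is where the technical work goes.
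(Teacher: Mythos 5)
Your part~(a) is essentially the paper's own argument: graph exponentiation (each node repeatedly completing its known neighborhood), an $O(\log d)$ bound on the number of doubling rounds, and the observation that a component of size $k=O(n^{\eps/3})$ blows up to at most $O(k^3)=O(n^{\eps})$ virtual edges, for $O(n^{1+2\eps/3})$ total memory spread over $\widetilde{O}(n^{1-\eps/3})$ machines. That part is correct and matches the paper.

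Part~(b), however, has a genuine gap: it is a description of the difficulty, not a solution to it. The phrase ``maintain a parent-pointer/distance encoding of a BFS skeleton toward the current minimum-ID candidate'' presupposes an orientation of the tree, but the trees are unrooted, and rooting them within total memory $O(n)$ is precisely the hard part. Unbudgeted exponentiation from a node in the middle of a long path forces that node to materialize a ball of size up to $k$, and you have no argument that your ``thin'' $\widetilde{O}(d^2)$-per-node representation can be maintained under the doubling updates, nor that pointer chasing toward a min-ID candidate converges in $O(\log d)$ phases (this is essentially Hash-to-Min, for which only an $O(\log n)$ bound is known). The paper's actual mechanism is a \emph{budgeted} exponentiation: each node gets an edge budget (initially $d^3$, which is where the hypothesis $n'\cdot d^3=O(n)$ enters), stops expanding when the budget is exhausted, nodes whose entire subtree has been absorbed pick their parent and are deleted, and the freed budget is recycled to the survivors. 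A counting argument over small subtrees shows the budget grows as $B_{i+1}\geq B_i^{7/6}$, so $O(\log\log n)$ phases of $O(\log d)$ rounds suffice; the forest is thereby rooted, after which a separate $O(\log d)$-round pointer-doubling step propagates root IDs. In particular, your attribution of the $\log\log n$ factor to a ``sort/scatter'' load-balancing subroutine is not substantiated and is not where the factor actually comes from: it counts the phases needed for the per-node budget to grow from $d^3$ to the component size. Without some version of this peel-and-recycle idea (or another concrete mechanism that keeps the total number of materialized virtual edges at $O(n)$ in every round), part~(b) does not go through.
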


%\paragraph{Shattering}
Next, we will provide the results corresponding to the two main parts of our algorithm, the \emph{shattering} and the \emph{post-shattering}.

\begin{lemma}[Shattering]\label[lemma]{lemma:shattering}
There are  
\begin{enumerate}[a)]
\item an $O(\log \log n \cdot \log \log \Delta)$-round low-memory $\mpc$ algorithm that uses $M=\widetilde{O}(n^{1-\alpha/3})$ machines and
\item an $O(\log^2 \log n \cdot \log \log \Delta)$-round low-memory $\mpc$ algorithm with $M=\widetilde{O}(n^{1-\alpha})$ machines
\end{enumerate}
that compute an independent set on an $n$-node tree with maximum degree $\Delta$ so that the remainder graph, after removal of the independent set nodes and their neighbors, w.h.p.~has only components of size at most $\poly \log n$. 
\end{lemma}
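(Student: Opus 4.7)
The plan is to realize the two-phase shattering strategy outlined in the overview, invoking \Cref{lemma:hashToMin} as a black box for gathering. For the first phase (degree reduction), I would iterate the following subsample-and-conquer step $O(\log_{1+\eps}\log \Delta) = O(\log\log \Delta)$ times, letting $\Delta_t$ denote the current degree upper bound (starting with $\Delta_0 = \Delta$). Each node is sampled independently with probability $p_t = \Delta_t^{-1/(1+\eps)}$; let $G'_t$ be the subtree induced by sampled nodes. I would then feed $G'_t$ into \Cref{lemma:hashToMin}, gather each connected component on a single machine, and compute a random MIS inside it by picking one of the two $2$-colorings of the tree uniformly at random and taking the ``black'' class as independent. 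After one such round, I would delete all MIS nodes together with their neighbors from $G$ before starting the next iteration.

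The technical heart is to establish the three properties i)--iii) in the overview and to argue that they imply the degree drops from $\Delta_t$ to $\Delta_{t+1} := \Delta_t^{1/(1+\eps)}$ w.h.p. For (i), each edge of the BFS tree of a component of $G'_t$ uses two independent samples, so a standard Chernoff/union-bound argument over root-to-leaf paths shows diameter $O(\log_{\Delta_t} n) = O(\log n / \log \Delta_t)$. For (ii), because of the tree structure, a ball of radius $r$ in $G'_t$ has at most $\Delta_t^{O(r)}$ candidate nodes, each sampled independently with probability $p_t$; a Chernoff bound on the intersection yields component size $\leq n^{\eps/3}$ w.h.p. For (iii), every node $v$ of degree $\geq \Delta_t^{1/(1+\eps)}$ expects $\Delta_t^{\eps/(1+\eps)} = \poly\log n$ sampled neighbors (as long as $\Delta_t = \Omega(\poly\log n)$), so w.h.p.\ it has $\Omega(\poly\log n)$ sampled neighbors; since the random $2$-coloring of the component containing those neighbors is chosen uniformly and independently from the choices in other components, the probability that none of them becomes an MIS node is exponentially small, so $v$ gets dominated and removed. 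The main obstacle here is carefully handling the independence: the MIS in a component is not independent across its nodes, but because each component's $2$-coloring is chosen by a single fresh random bit, the events ``sampled neighbor $u_i$ is in the MIS'' can be partitioned by component, with each component contributing a fresh $\{0,1\}$-choice to $v$; with $\poly\log n$ sampled neighbors and any split among components, one gets a polylogarithmic number of independent coin flips, which suffices for a high-probability bound.

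Iterating the above $O(\log\log \Delta)$ times drives the maximum degree down to $\Delta' = \poly\log n$ w.h.p.\ (union-bounding over iterations). I would then apply the low-degree shattering phase of Ghaffari's $\local$ MIS algorithm \cite{Ghaffari-MIS}, which runs in $O(\log \Delta') = O(\log\log n)$ rounds and leaves remainder components of size $\poly(\Delta')\cdot \log n = \poly\log n$. Simulation in low-memory $\mpc$ is direct: each round of Ghaffari's algorithm only requires a node to exchange $O(1)$ bits with its neighbors, so as long as every node together with its current neighborhood fits on a single machine (which is guaranteed by the clean-up of \cref{smallerThanDelta} once $\Delta' = \poly\log n < n^{\eps}$), one Ghaffari round is one $\mpc$ round.

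For the round-complexity accounting, each subsample-and-conquer iteration consists of one gathering call with diameter $d = O(\log_{\Delta_t} n) = O(\log n / \log \Delta_t)$, so $\log d = O(\log\log n)$, plus a constant amount of local work. Using variant (a) of \Cref{lemma:hashToMin} gives $O(\log\log n)$ rounds per iteration, and multiplying by $O(\log\log \Delta)$ iterations and adding the final $O(\log\log n)$-round Ghaffari phase yields the bound $O(\log\log n \cdot \log\log \Delta)$ on $M = \widetilde{O}(n^{1-\alpha/3})$ machines. Using variant (b) costs an extra $\log\log n$ factor per gathering, giving $O(\log^2\log n \cdot \log\log \Delta)$ rounds on $M = \widetilde{O}(n^{1-\alpha})$ machines. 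For variant (b) I also need to verify the side condition $n'\cdot d^3 = O(n)$; since $d = O(\log n/\log \Delta_t)$ and $n' \leq n$, this holds whenever $\Delta_t$ is at least a sufficiently large polylog, which is exactly the regime in which the degree-reduction step is invoked (below that threshold, we are already in the Ghaffari phase). The main obstacle I expect in writing this up carefully is pinning down the independence argument for property (iii) and carrying the high-probability bounds cleanly through the $O(\log\log \Delta)$ adaptive iterations, where the graph that is subsampled in iteration $t+1$ depends on randomness used in iteration $t$.
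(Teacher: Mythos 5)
Your overall plan coincides with the paper's: iterated subsample-and-conquer (\Cref{lemma:degreeAndDiam}, proved via \Cref{lemma:degdrop}) to push the maximum degree down to $\poly\log n$, followed by the shattering phase of Ghaffari's $\local$ algorithm (\Cref{lemma:ghaffariMIS}), with the round counts assembled exactly as you describe. However, there are two concrete gaps. The more important one is in your independence argument for property (iii): you claim that ``any split among components'' of the $\poly\log n$ sampled neighbors of a high-degree node $v$ yields polylogarithmically many independent coin flips, but if all of those neighbors landed in a single connected component of $G'$ you would get exactly one fresh coin flip, hence failure probability $1/2$ rather than a high-probability guarantee. The missing observation --- and the one place where the tree structure is truly essential --- is that no two neighbors of a non-sampled node $v$ can lie in the same connected component of $G'=G[V']$: a path between them inside $G'$ together with the two edges through $v$ would close a cycle. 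Hence the sampled neighbors lie in pairwise distinct components, each is colored black independently with probability $1/2$, and a Chernoff bound plus a union bound over all nodes finishes the argument. This is exactly how the paper closes the step.

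The second, smaller gap is your verification of the side condition $n'\cdot d^3=O(n)$ for variant (b) of \Cref{lemma:hashToMin}: the bound $n'\le n$ is not enough, since it would force $d=O(1)$. You need that the subsampling itself shrinks the vertex set, i.e., w.h.p.\ $n'=O\left(n\cdot \Delta_t^{-\delta}\right)$ (up to lower-order terms); combined with $d=O\left(\log_{\Delta_t} n\right)$ this gives $n'\cdot d^3=O\left(n\cdot\Delta_t^{-\delta}\cdot\log^3 n\right)=O(n)$ once $\Delta_t$ is a sufficiently large $\poly\log n$ --- which is precisely the standing assumption the paper imposes before invoking the single-step degree-reduction lemma. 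Both issues are local and fixable; the remaining components of your argument (diameter and size bounds for the components of $G'$, the union bound over the $O(\log\log\Delta)$ adaptive iterations, and the straightforward $\mpc$ simulation of the low-degree $\local$ phase) match the paper's proof.
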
%\matodo{no 2 variants needed, right?}

%An outline of the proof is given in \Cref{}, and a formal proof in \Cref{}. 
The proof of this \emph{Shattering Lemma} can be found in \Cref{sec:shattering}. The following \emph{Post-Shattering Lemma} is a direct consequence of the \emph{Gathering Lemma}.

%we overview the shattering, and 

%in \Cref{}, we show to perform the post-shattering. 
\begin{lemma}[Post-Shattering]\label[lemma]{lemma:postShattering}
%Consider a graph consisting of connected components of size $k$. 
There are
% is an $O(\log k)$-round low-memory $\mpc$ algorithm 
\begin{enumerate}[a)]
\item an $O(\log k)$-round low-memory $\mpc$ algorithm with $M=\widetilde{O}\left( n^{1 - \eps/3} \right)$ machines and
\item an $O(\log k \cdot \log \log n)$-round low-memory $\mpc$ algorithm with $M=\widetilde{O}\left(n^{1 - \eps} \right)$ machines
\end{enumerate}
that find an $\mis$ in an $n$-node graph consisting of connected components of size $k=O\left(n^{\eps/3}\right)$. 
\end{lemma}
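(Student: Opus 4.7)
The plan is to derive \Cref{lemma:postShattering} as an essentially immediate corollary of \Cref{lemma:hashToMin} (Gathering). The key observation is that once a machine holds every vertex (and hence every edge) of a connected component, it can compute an $\mis$ of that component by a purely local procedure---e.g., a sequential greedy scan---in zero further $\mpc$ rounds. Since a global $\mis$ is exactly the disjoint union of per-component $\mis$es, correctness then follows automatically, so the entire argument reduces to invoking gathering and then doing a single round of local computation.

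First, I would bound the diameter: a connected graph on at most $k$ vertices has diameter at most $k - 1$, so I may set $d \leq k$. Applying part a) of \Cref{lemma:hashToMin} with this $d$ yields an $O(\log k)$-round algorithm on $M = \widetilde{O}(n^{1 - \eps/3})$ machines that places every connected component onto a single machine, matching the bound in part a) of \Cref{lemma:postShattering}. Analogously, part b) of \Cref{lemma:hashToMin} yields an $O(\log k \cdot \log \log n)$-round algorithm on $M = \widetilde{O}(n^{1 - \eps})$ machines, matching part b). Since $k = O(n^{\eps/3})$ while each machine has $\widetilde{O}(n^\eps)$ memory, a whole component (vertices plus at most $k^2 = O(n^{2\eps/3})$ edges) fits comfortably in local memory, so the concluding local $\mis$ computation on each machine is free of communication.

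The only point that deserves care is the side condition $n' \cdot d^3 = O(n)$ required by part b) of \Cref{lemma:hashToMin}. In the intended use the input to \Cref{lemma:postShattering} is the remainder produced by the shattering phase, whose components have size $k = \poly \log n$ (much smaller than the worst-case $n^{\eps/3}$ allowed by the statement), so the condition becomes $n' \cdot \poly \log n = O(n)$ and is satisfied trivially since $n' \leq n$ and the polylogarithmic slack can be absorbed into $\widetilde{O}$. Beyond this verification, everything is a direct read-off of the Gathering Lemma, so I expect the write-up to be only a few lines.
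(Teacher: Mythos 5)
Your proposal is correct and matches the paper's own proof, which likewise reduces the statement to the Gathering Lemma (with $d \le k$) followed by a purely local per-component $\mis$ computation. The only cosmetic difference is in discharging the side condition $n' \cdot d^3 = O(n)$ for part b): the paper invokes Theorem 1.1 of Ghaffari to certify that the post-shattering remainder $n'$ is genuinely small, whereas you absorb the polylogarithmic factor into the $\widetilde{O}$ total-memory bound---both are acceptable, though the paper's route avoids relying on the $\widetilde{O}$ slack.
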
%\matodo{probably rephrase}
\begin{proof}
%Since connected components are small, ..., direct consequence of \Cref{lemma:hashToMin}. A minor detail where one has to be slightly careful is how to handle the small remaining components.
%However, in our case, we can again find the remaining connected components by using Hash-to-Min in $\bigO(\log d) = \bigO(\log \log n)$ time, where $d$ denotes the diameter of the remaining components.
%We get the following corollary.
%\begin{corollary}
	%Let $G$ be a graph with maximum degree $\Delta$.
	%There exists a $\mpc$ algorithm that uses at most $n^\eps$ memory per machine for any constant $\eps$ and that finds an $\mis$ in time $\bigO(\log \Delta)$.
	%In particular, the runtime becomes $\bigO(\log \log n)$ when $\Delta = \poly \log n$.
%\end{corollary}
By \Cref{lemma:hashToMin}, we can gather the connected components in $O(\log k)$ rounds. 
Then, an $\mis$ of each connected component can be computed locally.
Note that Theorem 1.1 by Ghaffari~\cite{Ghaffari-MIS} certifies that the number of nodes remaining after our shattering process can be made small enough to satisfy the conditions required by \Cref{lemma:hashToMin}.
% concluding the proof.
%
%Note the naive simulation of the corresponding post-shattering LOCAL algorithm~\cite{panconesi1992improved} would lead to round complexity of $2^{\bigO(\sqrt{\log \log n})}$.
%After this shattering has been performed, completing the independent set to a $\mis$ can be done easily:
\end{proof}
Note that the naive simulation of the corresponding $\local$ post-shattering algorithm~\cite{Ghaffari-MIS,panconesi1992improved} would lead to a round complexity of $2^{\bigO(\sqrt{\log \log n})}$.

%\subsection{Shattering}

%\subsection{Post}

% We first reduce the degree of high-degree vertices by subsampling vertices and computing $\mis$ on these subsampled vertices, removing them along with their neighbors from the graph. Once degree is small enough $\Delta'=\poly \log n$, we apply shattering algorithm, beeping algorithm, for $O(\log \Delta')$ rounds to shatter the graph into small components of size $\poly \Delta' \cdot \log n = \poly \log n$. 

% These components are small enough to fit on a machine, what we can achieve using Hash-to-all. This is efficient sice we can guarantee that also the diameter of our components after shattering are small. 

%\subsection{Wrap-Up}
%In this short section, 
We now put together the results to prove \Cref{thm,thm2}. 

\begin{proof}[Proof of \Cref{thm,thm2}]
We apply the \emph{shattering} algorithm from \Cref{lemma:shattering} to get an independent set, with connected components of size $k=\poly \log n$ in the remainder graph. Then we run the \emph{post-shattering} algorithm from \Cref{lemma:postShattering} to find an $\mis$ in all these components. The combination of the initial independent set found by the \emph{shattering} and all the $\mis$ found by the \emph{post-shattering} results in an $\mis$ in the original tree. 
\end{proof}

%\todo{insert proof of Theorem \ref{thm2}}

\paragraph{Memory per Machine below $\Delta$}
%In the low-memory $\mpc$ model definition, we allow $S = O\left(n^\eps\right)$ memory per machine. But what 
If the degree of a node is larger than the local memory, one needs to store several lower-degree copies of this node on different machines. Here, we give a short argument for why one can assume without loss of generality that all incident edges of a node are stored on the same machine. 
%+ \Delta$ for the memory per machine parameter $S$.
%Here, we give a short sketch on how to get rid of this assumption.
Notice that in a tree with $n$ nodes, there can be at most $n^{1 - \eps/2}$ nodes with degree at least $n^{\eps/2}$.
If we now just ignore all these high-degree nodes and find an $\mis$ among the remaining nodes, the resulting graph, after removal of all $\mis$ nodes and their neighbors, has at most $n^{1 - \eps/2}$ nodes.
Repeating this argument roughly $2/\eps$ times gives an $\mis$ in the whole input graph.
%Given this rather trivial method to resolve the high degree nodes, we decided to keep the model as clean as possible and not artificially define how nodes with too many neighbors to fit one machine are stored.
%

\section{Shattering}\label[section]{sec:shattering}
%The \emph{shattering} consists of \emph{iterated subsample-and-conquer} and \emph{low-degree local shattering}.

\begin{lemma}[Iterated Subsample-and-Conquer]\label[lemma]{lemma:degreeAndDiam}
There are 
\begin{enumerate}[a)]
\item an $O(\log_{1+\eps} \log \Delta)$-round low-memory $\mpc$ algorithm with $M=\widetilde{O}(n^{1-\alpha/3})$ machines and
\item an $O(\log_{1+\eps} \log \Delta\cdot \log \log n)$-round low-memory $\mpc$ algorithm with $M=\widetilde{O}(n^{1-\alpha})$ machines 
\end{enumerate}
that compute an independent set on an $n$-node tree with maximum degree $\Delta$ such that the remainder graph, after removal of the independent set nodes and their neighbors, w.h.p.~has maximum degree $\poly \log n$. 
\end{lemma}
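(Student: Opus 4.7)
The plan is to formalize the iterated subsample-and-conquer procedure outlined in the overview. Starting from the input tree $G$ with maximum degree at most $\Delta$, I would iterate for $k = O(\log_{1+\eps}\log\Delta)$ phases. In phase $i$, the algorithm works on the current remainder graph $G_i$ whose maximum degree is $\Delta_i$ (with $\Delta_0 \le \Delta$). It samples every node of $G_i$ independently with probability $p_i = \Delta_i^{-1/(1+\eps)}$, defines $G'_i$ to be the subgraph induced by the sampled nodes (a forest, since $G_i$ is a subforest of a tree), gathers each connected component of $G'_i$ onto a single machine via the Gathering Lemma, locally 2-colors each such tree component, and adds one of the two color classes, chosen uniformly at random and independently per component, to the output independent set. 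Finally, it removes the chosen nodes and all of their neighbors from $G_i$ to produce $G_{i+1}$.

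The central step is to show that one phase reduces $\Delta_i$ to $\Delta_{i+1} = O(\Delta_i^{1/(1+\eps)} \cdot \poly\log n)$ w.h.p. I would prove the three structural properties sketched in the overview. First, because $G_i$ is a forest and sampling is independent, the component of any sampled node in $G'_i$ is stochastically dominated by a Galton--Watson branching process with expected offspring $\Delta_i p_i = \Delta_i^{\eps/(1+\eps)}$. A standard tail bound on the total progeny then yields that, w.h.p., every component of $G'_i$ has diameter $O(\log_{\Delta_i} n)$ and size at most $n^{\eps/3}$; these are exactly the hypotheses required by the Gathering Lemma. Second, for any vertex $v$ of degree at least $\Delta_i^{1/(1+\eps)}\,\poly\log n$ in $G_i$, the number of its sampled neighbors is $\Omega(\poly\log n)$ in expectation and, by Chernoff, also w.h.p. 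The tree structure now supplies the crucial independence: any two distinct neighbors of $v$ are connected in $G_i$ only via the unique path through $v$, so if $v$ is not sampled they lie in pairwise different components of $G'_i$, and if $v$ is sampled the analysis reduces to $v$'s own component, which contains at least one sampled neighbor w.h.p. In the first case the random color choices on different components are mutually independent, so each sampled neighbor joins the IS with probability $1/2$ independently; with $\Omega(\log n)$ such neighbors at least one is selected w.h.p.\ and $v$ is removed. In the second case, $v$ is either itself added to the IS or has a neighbor that is, and in either case $v$ disappears from $G_{i+1}$.

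A union bound over the (at most $n$) vertices then shows that every vertex exceeding the new degree threshold is removed, giving $\Delta_{i+1} \le \Delta_i^{1/(1+\eps)} \cdot \poly \log n$ w.h.p. Iterating $k = O(\log_{1+\eps}\log\Delta)$ times drives the maximum degree down to $\poly \log n$, since the polylogarithmic slack accumulated per step is negligible compared to the polynomial shrinking of $\Delta_i$ as long as $\Delta_i \gg \poly \log n$. The two variants of the lemma correspond to plugging in the two variants of the Gathering Lemma: variant (a) uses $M = \widetilde{O}(n^{1-\alpha/3})$ machines with the cheaper gathering cost, while variant (b) uses $M = \widetilde{O}(n^{1-\alpha})$ machines at the price of an additional $\log \log n$ factor per gathering step. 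The remaining per-phase work (sampling, local 2-coloring, local IS extraction, and removing marked nodes with their neighbors) is implementable in $O(1)$ MPC rounds under the low-memory constraint; a careful summation across phases then matches the round bounds stated in the lemma.

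The main technical obstacle is the independence argument underlying property (iii): making rigorous the claim that the random MIS choices in different components of $G'_i$ act on a high-degree non-sampled vertex $v$ as independent $\mathrm{Bernoulli}(1/2)$ trials. The delicate point is that this hinges on $G$ being a tree --- if $G$ had cycles, two sampled neighbors of $v$ could lie in the same component of $G'_i$ via a path avoiding $v$, breaking the independence and thus the w.h.p.\ removal guarantee. This tree-specific structural feature is precisely what enables the exponential speed-up. A secondary obstacle is the concentration of the branching-process diameter and total size, which has to be carried out carefully so that the $k \le n^{\eps/3}$ hypothesis of the Gathering Lemma is satisfied in every phase, even the last ones where $\Delta_i$ has already shrunk to near $\poly \log n$ and the branching factor $\Delta_i^{\eps/(1+\eps)}$ is close to one.
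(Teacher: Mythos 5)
Your overall architecture is exactly the paper's: iterate a single subsample--gather--2-color--remove phase $O(\log_{1+\eps}\log\Delta)$ times, use the tree structure to argue that the sampled neighbors of a non-sampled high-degree node lie in pairwise distinct components of $G'$ and hence receive independent color choices, and obtain the two machine/round trade-offs from the two variants of the Gathering Lemma. However, one step would fail as written: your justification of properties (i) and (ii) via ``a standard tail bound on the total progeny'' of the dominating Galton--Watson process. That process has expected offspring $\Delta_i p_i=\Delta_i^{\eps/(1+\eps)}\gg 1$, i.e., it is \emph{supercritical}, so its total progeny is infinite with probability bounded away from zero and admits no tail bound at all; moreover, even a progeny bound would not by itself yield the diameter bound $O(\log_{\Delta_i} n)$ that the $O(\log d)$ gathering time requires.

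The paper's (easy) fix is to bound the diameter first and deduce the size from it: a fixed path of length $\ell$ in the tree survives the sampling only if all its nodes are sampled, which happens with probability $\Delta^{-\delta\ell}=1/\poly(n)$ once $\ell=\Omega\left(\frac{1}{\delta}\log_{\Delta}n\right)$, and a union bound over the at most $n^2$ paths of the tree bounds the diameter of every component of $G'$ by $O\left(\frac{1}{\delta}\log_{\Delta}n\right)$. Separately, a Chernoff-plus-union bound shows every node has $O(\Delta^{1-\delta})$ sampled neighbors, so every component has at most $\Delta^{(1-\delta)\cdot O(\log_{\Delta}n/\delta)}=n^{O((1-\delta)/\delta)}$ nodes, and $\delta=\Theta(1/(1+\eps))$ is tuned so that this is $O(n^{\eps/3})$ as required by the Gathering Lemma. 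With this substitution the rest of your argument (Chernoff for property (iii), independence of the per-component colorings, the union bound over vertices, and the iteration count) matches the paper's proof; also note that your worry about late phases with branching factor near one is moot, since the iteration simply stops once $\Delta_i$ has dropped to $\poly\log n$, which is all the lemma claims.
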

%\matodo{if we let $\delta \rightarrow 1$, then $\eps\rightarrow 0$}

The proof of this lemma can be found in \Cref{sec:sampling}.

\begin{lemma}[Low-Degree Local Shattering \cite{Ghaffari-MIS}]\label[lemma]{lemma:ghaffariMIS}

There is an $O(\log \Delta)$-round LOCAL algorithm that computes an independent set on an $n$-node graph with maximum degree $\Delta$ so that the remainder graph, after removal of all nodes in the independent set and their neighbors, w.h.p.~has connected components of size $\poly \Delta \cdot \log n$.
\end{lemma}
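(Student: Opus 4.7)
The plan is to use Ghaffari's desire-level MIS algorithm \cite{Ghaffari-MIS} verbatim and analyze it in two stages: a single-node survival bound, followed by a standard shattering union bound. The algorithm maintains at each node $v$ a desire level $p_v^{(t)}$, initialized to $1/2$. In round $t$, node $v$ marks itself with probability $p_v^{(t)}$; if $v$ is marked and no neighbor is marked, then $v$ joins the independent set and all its neighbors are removed. The desire level is updated based on the effective degree $d_v^{(t)} = \sum_{u \in N(v)} p_u^{(t)}$: if $d_v^{(t)} \geq 2$ then $p_v^{(t+1)} = p_v^{(t)}/2$ (a \emph{high-degree} round for $v$), else $p_v^{(t+1)} = \min(2 p_v^{(t)}, 1/2)$ (a \emph{low-degree} round). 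The whole thing clearly runs in $O(\log \Delta)$ LOCAL rounds.

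Next I would establish the single-node survival bound: the probability that $v$ is still alive (i.e., neither removed nor having any neighbor removed) after $\Theta(\log \Delta)$ rounds is at most $\Delta^{-c}$ for any desired constant $c$. The key observation is that each round is \emph{golden} for $v$ with constant probability, where golden means that $v$ experiences one of several good events (it joins the MIS, a neighbor joins the MIS, or its desire level cap drops). Concretely, in any low-degree round, $v$ itself is marked and unopposed with probability at least $p_v^{(t)}(1 - d_v^{(t)}) = \Omega(p_v^{(t)})$; and in a high-degree round, a constant fraction of $v$'s effective neighborhood is marked and at least one such neighbor is an unopposed mark with constant probability, so $v$ dies. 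A Chernoff-type argument on the number of golden rounds, distinguishing nodes that spend most time in high- vs.\ low-degree mode, gives that after $O(\log \Delta)$ rounds the survival probability is $\Delta^{-\Omega(1)}$; by tuning constants this becomes $\Delta^{-c}$.

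The final step is the standard shattering argument. Because the algorithm is LOCAL with radius $R = O(\log \Delta)$, the event ``$v$ survives'' is determined by the randomness in the $R$-neighborhood of $v$. Hence, for any set $T$ of surviving nodes that is pairwise at graph-distance $> 2R$, the survival events are mutually independent, so $T$ has all its nodes surviving with probability at most $\Delta^{-c|T|}$. Applying this to a maximal such $T$ inside any connected remainder component $C$, one has $|T| = \Omega(|C|/\Delta^{2R+1})$. A union bound over the at most $n \cdot (4\Delta)^{|T|}$ connected node-subsets of size $|T|$ in $G^{2R+1}$ then shows that, with $c$ chosen sufficiently large relative to $R$, any surviving component of size larger than $\poly(\Delta) \cdot \log n$ occurs with probability at most $n^{-\Omega(1)}$, completing the w.h.p.\ shattering guarantee.

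The main obstacle is the single-node survival bound: tracking $d_v^{(t)}$ requires separating nodes into those for which at least half of their golden rounds are of high-degree type (in which case $v$ is likely killed by a neighbor) and those for which at least half are low-degree (in which case $p_v^{(t)}$ stays above $\Omega(1)$ for many rounds and $v$ marks itself successfully with high probability). Getting the constants aligned so that the exponent $c$ in $\Delta^{-c}$ exceeds the $O(R \log \Delta)$ cost in the union bound over connected subsets is the delicate part, but it is exactly what Ghaffari's analysis already carries out, so I would invoke it essentially as a black box.
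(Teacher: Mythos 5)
Your proposal is correct and coincides with the paper's treatment: the paper does not prove this lemma but imports it verbatim from Ghaffari's MIS paper, and your sketch (desire levels, golden rounds, the $\Delta^{-c}$ single-node survival bound, and the union bound over connected subsets in $G^{2R+1}$) is precisely the argument carried out there. One minor inaccuracy: the unopposed-mark probability in a low-degree round is not $p_v^{(t)}\bigl(1-d_v^{(t)}\bigr)$ (which can be negative since $d_v^{(t)}$ may be up to $2$) but rather $p_v^{(t)}\prod_{u\in N(v)}\bigl(1-p_u^{(t)}\bigr)\ge p_v^{(t)}\cdot 4^{-d_v^{(t)}}=\Omega\bigl(p_v^{(t)}\bigr)$, which still yields the claimed conclusion.
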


We now combine these two results to prove \Cref{lemma:shattering}.

\begin{proof}[Proof of \Cref{lemma:shattering}]
We apply the algorithm of \Cref{lemma:degreeAndDiam}, w.h.p.~yielding an independent set with a remainder graph that has maximum degree $\Delta'=\poly \log n$. On this low-degree graph, we simulate the LOCAL algorithm of \Cref{lemma:ghaffariMIS} in a straight-forward manner, which takes $O(\log \Delta')=O(\log \log n)$ rounds and w.h.p.\ leaves us with connected components of size $\poly \Delta' \cdot \log n = \poly \log n$. 
\end{proof}

\subsection{Degree Reduction via Iterated Subsampling}\label[section]{sec:sampling}
We prove the following result, and then show how it can be used to prove \Cref{lemma:degreeAndDiam}. 
For the purposes of the proof of \Cref{lemma:degdrop} we assume that $\Delta$ is a large enough $\poly \log n$ in order to be able to apply \Cref{lemma:hashToMin}.
Notice that from the perspective of the final runtime, the exponent of the logarithm turns into a constant factor hidden in the $O$-notation.

\begin{lemma}
There are 
\begin{enumerate}[a)]
\item
an $O(\log \log n)$-round low-memory $\mpc$ algorithm with $M=\widetilde{O}(n^{1-\alpha/3})$ machines and
\item an $O(\log^2 \log n)$-round low-memory $\mpc$ algorithm with $M=\widetilde{O}(n^{1-\alpha})$ machines
\end{enumerate}
that compute an independent set on an $n$-node tree $G$ with maximum degree $\Delta=\Omega(\poly \log n)$ such that the remainder graph, after removal of the independent set nodes and their neighbors, w.h.p.~has maximum degree at most $\Delta^{(1+\delta')\delta}$, for some $\delta=\Theta\left(1/(1+\eps)\right)$ and any $\delta'>0$.
\label[lemma]{lemma:degdrop}
\end{lemma}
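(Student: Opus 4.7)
The plan is to execute a single ``subsample-and-conquer'' iteration as sketched in \Cref{sec:overview}. Sample each node of $V$ independently with probability $p := \Delta^{-\delta}$, where $\delta = \Theta\paren{1/(1+\eps)}$, to obtain $V'$, and set $G' := G[V']$. I would first establish, w.h.p., three structural properties of $G'$: (i) every connected component of $G'$ has diameter $O(\log_\Delta n)$, (ii) every connected component of $G'$ has at most $n^{\eps/3}$ nodes, and (iii) every $v \in V$ with $\deg_G(v) \geq T := \Delta^{(1+\delta')\delta}$ has $\Omega(\log n)$ sampled neighbors. Property~(iii) follows from a Chernoff bound, since the expected number of sampled neighbors of such $v$ is $T \cdot p = \Delta^{\delta'\delta}$, which is $\poly\log n$ by the hypothesis $\Delta = \Omega(\poly\log n)$. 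For (i) and (ii), I would fix a vertex $u$ and use that the number of length-$d$ paths in $G$ through $u$ is at most $\Delta^d$, and the number of connected subgraphs of size $k$ containing $u$ is at most $(e\Delta)^k$; each such configuration lies entirely in $V'$ with probability $p^{d+1}$ and $p^k$ respectively. A union bound over $u$ and over $d$, $k$ exceeding the stated thresholds then yields the claim.

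Given (i) and (ii), each component of $G'$ satisfies the hypotheses of the Gathering Lemma (\Cref{lemma:hashToMin}) with $d = O(\log_\Delta n) = O(\poly\log n)$ and $k = O(n^{\eps/3})$. For part~(a) of the present lemma, I invoke \Cref{lemma:hashToMin}(a) to place each component of $G'$ onto a single machine in $O(\log d) = O(\log\log n)$ rounds using $\widetilde{O}(n^{1-\alpha/3})$ machines; for part~(b), I invoke \Cref{lemma:hashToMin}(b) to achieve the same in $O(\log d \cdot \log\log n) = O(\log^2 \log n)$ rounds with $\widetilde{O}(n^{1-\alpha})$ machines (the side condition $n' \cdot d^3 = O(n)$ is met, since $n' \leq n$ and $d^3 = \poly\log n$). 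Each gathered component, being a subtree of $G$, admits exactly two proper $2$-colorings; the responsible machine picks one uniformly at random and adds all nodes of a fixed color class to the output independent set $I$.

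It remains to verify the degree drop. Fix any $v$ with $\deg_G(v) \geq T$; by (iii), $v$ has sampled neighbors $u_1, \dots, u_k$ with $k = \Omega(\log n)$. If $v \in V'$, then $v$ and all $u_i$'s share a common component of $G'$, and since each $u_i$ is adjacent to $v$ there, the random $2$-coloring places either $v$ or all of the $u_i$'s (which then populate the opposite color class) into $I$; in either case $v$ is in $I$ or adjacent to $I$. If $v \notin V'$, then because $G$ is a tree, the only path in $G$ between any two $u_i$'s passes through $v$, so $u_1, \dots, u_k$ lie in $k$ \emph{distinct} components of $G'$, each $2$-colored independently; the probability that no $u_i$ enters $I$ is $2^{-k} \leq n^{-c}$. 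A union bound over the $\leq n$ high-degree vertices $v$ completes the argument and establishes that w.h.p.\ every node of degree $\geq T$ is removed.

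The main obstacle, as I see it, is the simultaneous verification of (i) and (ii): the combinatorial counts of paths and of subtrees in a graph of maximum degree $\Delta$ must be offset against $p^{d+1}$ and $p^k$ tightly enough that both the diameter threshold $O(\log_\Delta n)$ and the size threshold $n^{\eps/3}$ decay super-polynomially in $n$ in the regime $\Delta = \Omega(\poly\log n)$. Tuning the hidden constant in $\delta$ (shrinking it slightly below $1/(1+\eps)$ if necessary, to absorb the $(e\Delta)^k$ factor) is what makes the quantitative accounting go through, and this is the calculation I expect to be the most delicate part of the proof.
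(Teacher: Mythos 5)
Your algorithm and overall plan coincide with the paper's: sample with probability $\Delta^{-\delta}$, establish the three structural properties, gather via \Cref{lemma:hashToMin}, randomly $2$-color each component, and use triangle-freeness to argue that the sampled neighbors of a non-sampled high-degree node lie in distinct, independently colored components. Property (iii), the gathering step, and the degree-drop argument are all fine and match the paper.

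The gap is exactly where you suspected, in the proof of (i) and (ii), and your proposed repair does not work. Bounding the number of length-$d$ paths through a fixed vertex by $\Delta^d$ and offsetting against the survival probability $p^{d+1}=\Delta^{-\delta(d+1)}$ gives an expected count of order $\Delta^{(1-\delta)d-\delta}$, which \emph{grows} with $d$ because $\delta=\Theta\paren{1/(1+\eps)}<1$; even for $d$ equal to the threshold $\Theta\paren{\frac{1}{\delta}\log_\Delta n}$ this is a positive power of $n$, not $1/\poly n$. The same happens for $(e\Delta)^k\cdot p^k=(e\Delta^{1-\delta})^k$ in (ii). Shrinking $\delta$ only makes the sampling probability larger and the bound worse; no choice of $\delta=\Theta\paren{1/(1+\eps)}$ makes the branching factor $\Delta$ lose to $\Delta^{-\delta}$. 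The paper sidesteps this by exploiting the tree structure twice. For the diameter: a tree on $n$ nodes contains at most $n^2$ paths in total (one per ordered pair of endpoints), so one union-bounds over only $n^2$ events, each of probability at most $\Delta^{-\delta\ell}=1/\poly n$ once $\ell=\Omega\paren{\frac{1}{\delta}\log_\Delta n}$. For the component size: no union bound over subgraphs is attempted at all; instead one shows by Chernoff that, w.h.p., every node has at most $O\paren{\Delta^{1-\delta}}$ sampled neighbors, i.e., the maximum degree \emph{inside} $G'$ is $O\paren{\Delta^{1-\delta}}$, and then deterministically a connected component of diameter $\ell$ and maximum degree $O\paren{\Delta^{1-\delta}}$ has at most $O\paren{\Delta^{(1-\delta)\ell}}=n^{O((1-\delta)/\delta)}$ nodes, which is $O\paren{n^{\eps/3}}$ for a suitable constant in $\delta$. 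With these two substitutions your proof goes through.
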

\begin{proof}
We first outline the algorithm and then slowly go through the steps of the algorithm again while proving its key properties. 
%In the end, we will explain how to choose $\delta$ appropriately. 
 
\paragraph{Algorithm} Every node is sampled independently with probability $\Delta^{-\delta}$ into a set $V'$. The connected components of $G'=G[V']$ are gathered by \Cref{lemma:hashToMin}, and one of the two 2-colorings is picked uniformly at random, independently for every connected component. This can be done locally. All the black nodes, say, are added to the $\mis$, and are removed from the graph along with their neighbors. 
See \cref{fig:sampling} in \cref{inspacegather}.

\begin{figure}
	\centering
	\includegraphics[scale=1.5]{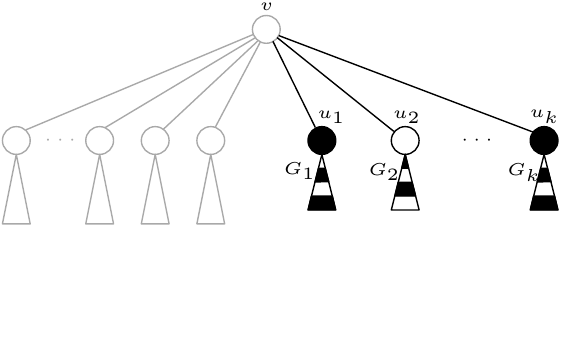}
	\caption{Consider a non-subsampled high-degree node $v$. While most of its neighbors will not be subsampled (gray nodes), there will still be many, say $k$, subsampled neighbors. 
	%. These are shown as gray nodes in the picture. 
	%However, given the high degree of $v$, the number $c$ of neighbors that are sampled satisfied $\ell \geq c \log n$ for any constant $c$. 
	Since we are in a tree, all the subtrees $G_1, \ldots G_k$ of subsampled neighbors of $v$ are disjoint and thus colored independently. Node $v$ will thus, w.h.p., see at least one black neighbor. This node will be added to the $\mis$, leading to the removal of $v$ from the graph.}
	\label{fig:sampling}
\end{figure}

\paragraph{Subsampling}
%We first show that $G'$ consists small connected components, in terms of number of nodes and diameter.
We first prove that the random subsampling leads to nice properties of the graph induced by subsampled nodes. 
\begin{claim}\label[claim]{claim:smallSampled}
%Let $T$ be a tree on $n$ nodes with maximum degree $\Delta=\omega(\poly \log n)$, and let $T'=T[V']$ be a vertex-induced subgraph where every node is subsampled into $V'$ independently with probability $\Delta^{-1/\delta}$, for an arbitrary constant $0<\delta<1$. Then 
After the subsampling, w.h.p., the following holds.
\begin{enumerate}[i)]
\item Every connected component of $G'$ has diameter $O\left( \frac{1}{\delta} \cdot \log_{\Delta} n\right)$. 
\item Every connected component of $G'$ consists of $n^{O\left((1-\delta)/\delta\right)}$ nodes.
\item Every node with degree $\Omega\left(\Delta^{(1+\delta')\delta}\right)$ in $G$ has degree $\Omega(\poly \log n)$ in $G'$. 
\end{enumerate}
%W.h.p.~every connected component of $G'$ has diameter $O\left(\frac{1}{\delta}\cdot \log_{\Delta} n\right)$ and consists of $n^{O\left((1-\delta)/\delta\right)}$ nodes. 
\end{claim}
\begin{proof}Consider an arbitrary path of length $\ell=\Omega\left(\frac{1}{\delta}\cdot \log_{\Delta} n\right)$ in $G$. This path is in $G'$ only if all its nodes are subsampled into $V'$, which happens with probability at most $\Delta^{- \delta\cdot\ell}=\frac{1}{\poly n}$. A union bound over all---at most $n^2$ many---paths in the tree $T$ shows that, w.h.p., the length of every path, and hence in particular also the diameter of every connected component, in $G'$  is bounded by $O\left(\frac{1}{\delta}\cdot \log_{\Delta} n\right)$. 
Since the degree among the subsampled nodes is bounded by $O\left(\Delta^{1-\delta}\right)$, w.h.p., which is a simple application of Chernoff and union bound, %by a Chernoff and union bound, 
it follows that every connected component consists of at most $O\left(\Delta^{(1-\delta) \cdot \ell}\right)=n^{O\left((1-\delta)/\delta\right)}$ nodes.  
Finally, another simple Chernoff and union bound argument shows that every node with degree $\Omega\left(\Delta^{(1+\delta')\delta}\right)$  in the graph $G$ has at least $\Omega\left(\Delta^{\delta'\cdot \delta}\right)=\Omega(\poly \log n)$ neighbors in $G'$, which concludes the proof of \Cref{claim:smallSampled}.
\end{proof}
%This allows us to gather ...
%\begin{claim}\label{claim:gathering}
%
%\end{claim}
%
\paragraph{Gathering}
Since $G'$ consists of components that have a low diameter by \Cref{claim:smallSampled} i) and that are small enough to fit on a single machine by \Cref{claim:smallSampled} ii)---provided that $\delta=\Theta\left(1/(1+\eps)\right)$ is chosen such that the components have size $O\left(n^{\eps/3}\right)$---we can gather them efficiently by \Cref{lemma:hashToMin}, in either $O(\log \log n)$ or $O(\log ^2 \log n)$ rounds. The random $\mis$ can then be easily computed locally. 

%Then, a 2-coloring can be computed locally and one of the two color classes, uniformly at random and independently for every connected component, can be taken as $\mis$. 

\paragraph{Random $\mis$}
It remains to show that every high-degree node in $G$, w.h.p., has at least one adjacent node that joins the random $\mis$, which leads to the removal of this high-degree node from the graph. Note that this is trivially true for all subsampled nodes, by maximality of an $\mis$.

%\begin{claim}\label{claim:degreeDrop}
%%Let $T$ be a tree on $n$ nodes with maximum degree $\Delta=\omega(\poly \log n)$ and $T'$ be a subsampled graph with sampling probability $\Delta^{-1/\delta}$, for an arbitrary $0<\delta<1$. Suppose that every node in $T'$ is colored independently with colors black or white with probability $1/2$ each. Then 
%W.h.p.~every node in $V\setminus V'$ with degree $\Omega\left(\Delta^{(1+\delta')\delta}\right)$ has at least one neighbor in a (partial) $\mis$. 
%\end{claim}
%\begin{proof}
Now consider an arbitrary non-subsampled node $v$ with degree $\Omega\left(\Delta^{(1+\delta')\delta}\right)$ and its $\Omega(\poly\log n)$ subsampled neighbors, by \Cref{claim:smallSampled} iii). Observe that, since we are in a tree and thus in particular in a triangle-free graph, there cannot be edges between these neighbors. Therefore no two neighbors of a non-subsampled node belong to the same connected component in $G'$, which means that all the neighbors in $V'$ of $v$ are colored independently, and hence are added to an $\mis$ independently with probability $1/2$. By the Chernoff inequality, w.h.p.~at least one of $v$'s neighbors must have been added to an $\mis$, and a union bound over all nodes concludes the proof of the degree reduction, and hence of \Cref{lemma:degdrop}. 
 %andunion bound argument then shows that at least one of these   $\Omega\left(\Delta^{\delta'\cdot \delta}\right)=\Omega(\poly \log n)$ neighbors in $T'$, and that w.h.p.~at least one of these neighbors is assigned color black, thus joins the $\mis$. \end{proof}
%By \Cref{claim:degreeDrop}, every node with degree $\Omega\left(\Delta^{(1+\delta')\delta}\right)$ has at least one neighbor that joins the $\mis$, which leads to the removal of this node from the tree.
%\paragraph{Analysis} 
%From \Cref{claim:smallSampled} and \Cref{}, it follows that every component can be gathered, and one of the two 2-colorings can be generated locally uniformly at random, independently for every connected components. \Cref{claim:degreeDrop} thus is applicable and concludes the proof. 
\end{proof}
\begin{proof}[Proof of \Cref{lemma:degreeAndDiam}]
Follows from $\log_{\frac{1}{(1+\delta')\delta}}\log \Delta=\log_{1+\eps}\log \Delta$ many applications of \Cref{lemma:degdrop}.
\end{proof}

\section{Gathering Connected Components}\label[section]{sec:CC}

In this section, we provide a proof of the \emph{Gathering Lemma} in \cref{lemma:hashToMin}. Our approach is essentially a tuned version of the Hash-to-Min algorithm by Chitnis et al.~\cite{Chitnis2013CCMapReduce} and the graph exponentiation idea by Lenzen and Wattenhofer~\cite{Lenzen2010brief}. Notice that, however, Chitnis et al.~only show an $\bigO(\log n)$ bound for the round complexity; it is not possible to just use their method as a black box.
The section is divided into two subsections, where we first give a simple and fast but memory-inefficient algorithm and then present a slightly slower algorithm that only needs a constant space overhead.
%\Cref{lemma:hashToMin}. 

In very recent works, independent of this paper, Andoni et al.~\cite{Andoni2018} and Assadi et al.~\cite{Assadi2018} studied, among other problems, finding connected components in the low-memory setting of MPC.
In particular, Andoni et al.\ give algorithms to find connected components and to root a forest with constant success probability, with $O(m)$ total memory in time $O(\log d \cdot \log \log n)$.
While their results are more general, ours have the advantages of being (arguably) much simpler and deterministic.
Furthermore, to turn their algorithm to work with high probability, the straightforward approach requires a logarithmic overhead in the total memory.

We present the naive gathering algorithm in \cref{sec1} and the in-space gathering in \cref{inspacegather}.

\subsection{Naive Gathering}\label{sec1}
%\begin{proof}[Proof of \Cref{lemma:hashToMin}, part a)]
%Our method to find the coloring in one component is to gather the whole component into a single machine and finding the coloring locally.
% Notice that while the properties we obtain in the first step depend on our assumption of a tree as an input graph, the Hash-to-Min algorithm works for general graphs as long as the number of nodes per component is small enough.
%A crucial part of our approach is a subroutine that finds a two-coloring of a tree of small size of roughly $n^{\eps}$ nodes and a diameter of roughly $\log n$.
%Our method is the very close to the Hash-to-Min algorithm by Chitnis et al.~\cite{Chitnis2013CCMapReduce}.
We first present the algorithm. The underlying idea of the algorithm is to find a minimum-ID\footnote{We assume without loss of generality that every node has a unique identifier. If not, every node can draw an $O(\log n)$-bit identifier at random, which w.h.p.\ will be unique.} node within every component and to create a virtual graph that connects all the nodes of that component to this minimum-ID node, the \emph{leader}.

\paragraph{Gathering Algorithm}
In every round, every node $u$ completes its $1$-hop neighborhood to a clique. Once a round is reached in which there are no more edges to be added, $u$ stops and selects its minimum-ID neighbor as its leader.
We refer to \cref{fig:hashmin} in \cref{inspacegather} for an illustration.

\begin{figure*}
	\centering
	\begin{subfigure}{.45\textwidth}

		\centering

		\includegraphics[scale=1]{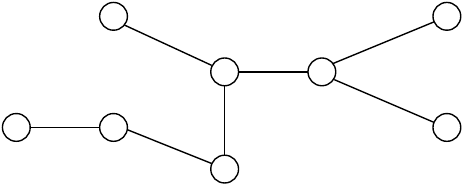}

		\caption{}\label{fig:exp1}

	\end{subfigure}
	\begin{subfigure}{.45\textwidth}

		\centering

		\includegraphics[scale=1]{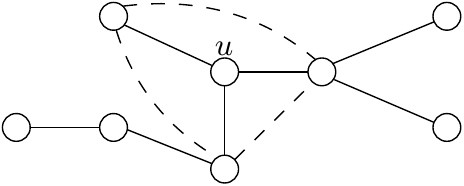}

		\caption{}\label{fig:exp2}

	\end{subfigure}
	\begin{subfigure}{.45\textwidth}

		\centering

		\includegraphics[scale=1]{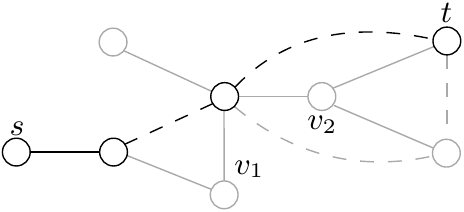}

		\caption{}\label{fig:exp3}

	\end{subfigure}
	\caption{We illustrate the gathering algorithm with help of the tree depicted in \cref{fig:exp1}. The edges added by node $u$ are illustrated in \cref{fig:exp2} by dashed arcs. \cref{fig:exp3} displays how the edges added by nodes $v_1$ and $v_2$, drawn as dashed arcs, shortcut the shortest path between nodes $s$ and $t$.}
	\label{fig:hashmin}
\end{figure*}

Observe that once there is a round in which $u$ does not add any edges, the component of $u$ forms a clique, and thus all nodes in this component have the same leader, namely the minimum-ID node in this clique. 
Next, we prove that this algorithm terminates quickly. 
\begin{claim}
	The gathering algorithm takes $\bigO(\log d)$ rounds on a graph with diameter $d$.
	\label[claim]{lemma:htaruntime}
\end{claim}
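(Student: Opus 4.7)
The plan is a standard graph-exponentiation argument: track, after each round, how far apart two nodes can still be in terms of their original distance. Concretely, let $G_0 = G$ denote the original graph and let $G_i$ be the graph after round $i$ of the gathering algorithm (where ``edges added'' include the cliques completed by every node). I will prove by induction on $i$ that $G_i$ contains an edge between every pair of nodes whose distance in $G_0$ is at most $2^i$. The base case $i=0$ is immediate: two nodes are adjacent in $G$ iff their distance in $G_0$ is $1 \le 2^0$.

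For the inductive step, assume the claim holds for $i$, and let $u,v$ be two nodes at distance $k \le 2^{i+1}$ in $G_0$. Pick any shortest $u$--$v$ path $u = x_0, x_1, \ldots, x_k = v$ and let $w = x_{\lfloor k/2 \rfloor}$. Then $w$ is at distance $\lfloor k/2 \rfloor \le 2^i$ from $u$ and at distance $\lceil k/2 \rceil \le 2^i$ from $v$ in $G_0$, so by the inductive hypothesis both $(u,w)$ and $(w,v)$ are edges of $G_i$, i.e., $u$ and $v$ both lie in $w$'s $1$-hop neighborhood in $G_i$. In round $i+1$, node $w$ completes its $1$-hop neighborhood to a clique, so it adds the edge $(u,v)$; hence $(u,v) \in G_{i+1}$, completing the induction.

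Now set $i^\star = \lceil \log d \rceil$. Any two nodes in the same connected component of $G$ are at distance at most $d$, so by the claim they are adjacent in $G_{i^\star}$; in other words, each connected component of $G_{i^\star}$ is a clique. At this point, for every node $u$ all pairs of neighbors of $u$ are already connected, so no node can add any new edges in round $i^\star + 1$. By the termination rule of the algorithm, every node stops in round $i^\star + 1$ and selects its minimum-ID neighbor (which is the minimum-ID node in its whole component, since the component is a clique) as its leader. Thus the algorithm terminates in $O(\log d)$ rounds, as claimed.

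The argument itself is routine once one adopts the exponentiation viewpoint; the main thing to be careful about is the interaction with the termination criterion, namely that the round in which edge additions \emph{stop} genuinely coincides with the round in which each component has become a clique, so that the chosen minimum-ID neighbor is indeed consistent across the whole component. The inductive claim above gives exactly this guarantee, and no further charging argument is needed. (I am not yet accounting for \textsf{MPC} implementability or memory usage of each round — those constraints, which drive the split between parts (a) and (b) of \Cref{lemma:hashToMin}, are orthogonal to the round-count bound proved here.)
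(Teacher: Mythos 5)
Your proof is correct and takes essentially the same approach as the paper: both rest on the observation that one round of clique-completion (at least) halves distances, so $O(\log d)$ rounds collapse each component to a clique, after which no edges are added and every node picks the component's minimum-ID node as leader. You phrase this as a doubling induction on distances in the original graph, whereas the paper tracks the shrinking diameter of the current graph (bounding it by $\lceil 2d/3 \rceil$ after one round); the two are interchangeable, with yours giving a marginally tighter constant and a more explicit handling of the termination round.
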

\begin{proof}
	Consider any shortest path $u_1, \ldots, u_\ell$ of length $2 \leq \ell \leq d$.
	After the first round, every $u_i$ gets connected to $u_{i - 2}$ and $u_{i + 2}$ for $2 < i < \ell - 1$.
	Thus, the diameter of the new graph is at most $\lceil 2d/3 \rceil$.
	After $\bigO(\log d)$ iterations, the diameter within each component has reduced to 1,  and the algorithm halts.	
\end{proof}

It remains to show that not too many edges are added, so that the virtual graph of any component still fits into the memory of a single machine. 
 %repeatedly adding edges does not blow up the graph by too much, so that the virtual every component still fits into the memory of a single machine. 
\begin{claim}The number of edges in the virtual graph created by the gathering algorithm in a component of size $k$ is $\bigO(k^{3})$. 
%the largest connected component in $G$.
	%The number of edges in the virtual graph created by the Hash-to-Min algorithm is $\bigO(n^{1 + 3\eps})$.
	\label[claim]{lemma:componentsize}
\end{claim}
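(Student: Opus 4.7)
The plan is to observe that, restricted to any fixed connected component of the original graph, the virtual graph built by the gathering algorithm is a simple graph on the $k$ vertices of that component, and therefore contains at most $\binom{k}{2} = O(k^2)$ edges, which is trivially $O(k^3)$. The statement is essentially a loose accounting bound, and the real content is just verifying that the algorithm does not merge distinct components of the original graph.

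Concretely, I would first show by induction on the round number $r$ that every edge present in the virtual graph at the end of round $r$ has both endpoints in the same connected component of the original graph $G$. The base case ($r=0$) is immediate since the virtual graph coincides with $G$. For the inductive step, when a node $u$ completes its $1$-hop neighborhood to a clique, each new edge $(v,w)$ it creates satisfies $v,w \in N_r(u)$; by the inductive hypothesis applied to the edges $(u,v)$ and $(u,w)$, both $v$ and $w$ lie in the same component as $u$, and hence so do $v$ and $w$ to each other. Thus no cross-component edges are ever introduced.

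Given this invariant, the virtual graph restricted to a component of size $k$ is a simple graph on $k$ vertices, which has at most $\binom{k}{2} \le k^2/2$ edges. Since $k^2/2 = O(k^2) \subseteq O(k^3)$, the claim follows.

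The main (and only) obstacle worth naming is the component-preservation invariant above; once that is in hand, the edge count is the trivial upper bound for a simple graph on $k$ vertices. I would not try to tighten the statement to $O(k^2)$: the looser $O(k^3)$ bound already suffices to guarantee that a component of size $k = O(n^{\varepsilon/3})$ fits into a machine of memory $S = \widetilde{O}(n^{\varepsilon})$, which is what the rest of the argument needs.
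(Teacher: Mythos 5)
Your argument is formally fine for the number of \emph{distinct} edges: the component-preservation invariant you prove by induction is correct (and is exactly what the paper silently assumes when it says a node ``may create an edge between any other two nodes in the corresponding component''), and a simple graph on $k$ vertices indeed has at most $\binom{k}{2}=O(k^2)\subseteq O(k^3)$ edges. However, you are bounding the wrong quantity for what the claim is actually used for. The virtual graph is built distributedly: in each round every node $u$ adds edges between pairs of its current neighbors, so the same virtual edge $(v,w)$ can be created---and stored---by every common neighbor of $v$ and $w$, of which there can be up to $k$. The paper does \emph{not} assume these duplicates are merged (see the remark at the end of Section 4, which explicitly defers deduplication to the shuffling step, and the remark right after this claim, which says ``a single edge can exist on up to $k$ machines'' and that the memory is ``blown up by a power 3''). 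The quantity being bounded is therefore the total number of edge-copies created, i.e., the memory footprint of the component, and for that your $\binom{k}{2}$ bound does not apply.

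The paper's proof is the corresponding counting-with-multiplicity one-liner: each of the $k$ nodes of the component creates, over the entire execution, at most one edge per pair of other nodes in the component, hence at most $k\cdot\binom{k-1}{2}=O(k^3)$ edge-copies in total. This is also why the cubic bound is not ``loose accounting'': it is tight in the worst case for the stored copies, and it is the reason the algorithm restricts components to size $k=O(n^{\eps/3})$ rather than the $O(n^{\eps/2})$ your $O(k^2)$ bound would suggest. To repair your write-up, keep your invariant, but replace the $\binom{k}{2}$ step by the per-node multiplicity count (or explicitly add and justify a deduplication step in the shuffling, which would be a genuinely different---and stronger---route than the paper's).
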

\begin{proof}
	During the execution of the algorithm, each node in a component may create an edge between any other two nodes in the corresponding component, thus at most $k^3$.
	%Since the component size is bounded by $k$, we get that at most $k^{3}$ edges are created per component.
	%The result follows by summing over the number of created edges of all components.
\end{proof}

Since we require the components to be of size at most $O(n^{\eps/3})$, the previous claim guarantees that the virtual graph of any connected component indeed fits into the memory. 
So as to not overload any machine with too many components, we assume that the shuffling distributes the components to the machines in an arbitrary feasible way, e.g., greedily\footnote{An alternative and simple way to prevent overloading is to add an $\bigO(\log n)$ factor of memory per machine and consider a random assignment of components to machines as a balls-into-bins process.}.
%\end{proof}

\begin{remark}
A weakness of the gathering algorithm is that we need $O(k^3)$ memory to store a connected component of size $k$, even if this component originally just consisted of as few as $k-1$ edges. This is because a single edge can exist on up to $k$ machines. In the worst case, the required memory is blown up by a power 3.  
This leads to a super-linear overall memory requirement, that is, we need roughly $N^{1 + 2\eps/3}$ total memory in the system.
%it adds  leads us to a super-linear memory requirement, i.e., we need $n^{1 + o(1)}$ memory in total in the system.
Notice that this can be implemented either by adding more machines or by adding more memory to the machines, since we do not care on which machines the resulting components lie, as long as they fit the memory.
%Furthermore, this is the only memory-bottleneck of our approach and hence, an obvious direction for future work.
%This is also the place where we require an extra $n^{o(1)}$ factor in the total memory in the system.
%That is, all the other steps can be implemented with a total memory of $n \cdot \poly \log n$ in the system.
\end{remark}

\subsection{In-Space-Gathering in Trees}\label{inspacegather}
%\subsection{In-Space-Gathering in Trees}
The simple and naive gathering algorithm can be very wasteful in terms of space usage over the whole system.
In this section, we provide a fine-tuned version of the gathering method that works, asymptotically, in space, thereby proving part b) of \Cref{lemma:hashToMin}.
In other words, the total space requirement drops to $\bigO(n)$.
Informally, our algorithm first turns every connected component into a rooted tree and then determines which nodes are contained in the same tree component by making sure that each node learns the ID of the root of its tree.
For the latter part, we prove the following.

%We make use of the tree structure of the graph and the low diameter.
%Our method can essentially be seen as rooting the tree. \todo{We are not rooting the (input) tree, but any tree resulting from the sampling.}
%The technical challenge is to coordinate between the nodes in the same (sub-)tree as to ensure that they agree on the root node.
%
%First, consider the following non-distributed process.
%Iteratively, every leaf node chooses its neighbor as its parent.
%The nodes that have selected a node as their parent are removed from the graph.
%If there are two adjacent leaf nodes, which can only be the case if the current tree consists of exactly two nodes, then the lower id node chooses the higher id node as its parent.
%Clearly, in $d$ time, the tree becomes rooted, i.e., every node has a parent.

\begin{lemma}
	\label[lemma]{lemma:rootedForest}
	There is an $O(\log d)$-round low-memory MPC algorithm that works in an $n$-node forest of rooted trees with maximum diameter $d$ and, for every node, determines the root of the corresponding tree. The algorithm requires $M = O\left(n^{1 - \eps}\right)$ machines.
\end{lemma}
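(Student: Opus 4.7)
The plan is to use pointer jumping (path doubling). Initialize $\pi(v) := \parent(v)$ for every non-root node and $\pi(v) := v$ for roots. In each iteration perform the update $\pi(v) \leftarrow \pi(\pi(v))$; by induction on $i$, after $i$ iterations $\pi(v)$ equals the $2^i$-th ancestor of $v$, capped at the root. Since each tree has depth at most $d$, $\lceil \log_2 d \rceil$ iterations suffice to make $\pi(v)$ equal to the root of $v$'s tree for every node $v$.

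The task therefore reduces to implementing one pointer-jumping update in $O(1)$ MPC rounds within the memory budget $S = \widetilde{O}(n^\eps)$, $M = O(n^{1-\eps})$. The naive approach of having each $v$ query the machine holding $\pi(v)$ breaks down: a single target $u$ (say, a root of a fat tree) may be the current pointer of $\Omega(n)$ distinct nodes, which cannot simultaneously be served from $u$'s machine of size only $\widetilde{O}(n^\eps)$. I would instead use a sort-and-broadcast routine. Maintain pointer records $(v, \pi(v))$, one per node, distributed across the machines. First sort these records by the target field $\pi(v)$ using Goodrich's $O(1/\eps)$-round MPC sorting primitive, so that all records with a common target $u$ occupy a contiguous block. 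Inject, for every $u$, a header record carrying the value $\pi(u)$ at the start of its block, via another sort on a refined key. Then propagate the header value $\pi(u)$ across its block by a segmented prefix-scan, a standard MPC primitive that runs in $O(1/\eps)$ rounds on linear total memory. Finally, sort the updated records back by $v$ to recover the canonical layout.

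Each substep costs $O(1/\eps) = O(1)$ rounds for constant $\eps$, and the total memory stays $O(n)$ throughout; hence one pointer-jumping step uses $O(1)$ MPC rounds and the entire algorithm runs in $O(\log d)$ rounds on $M = O(n^{1-\eps})$ machines with $\widetilde{O}(n^\eps)$ memory each. The main obstacle is exactly the high-indegree phenomenon highlighted above; the sort-and-broadcast implementation sidesteps it by replacing imbalanced point-to-point queries with globally balanced sorting and segmented-scan primitives that respect the local memory bound by construction. The only routine care required is in the bookkeeping across the alternation between key-by-$v$ and key-by-$\pi(v)$ layouts.
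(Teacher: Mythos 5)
Your proof is correct and follows essentially the same approach as the paper: pointer doubling on the parent pointers, halving the remaining path lengths each iteration so that $O(\log d)$ iterations suffice. The paper states this via a push-based update ($\parent(u) := \parent(v)$ for every child $u$ of $v$) and does not elaborate on load balancing, whereas you explicitly handle the high-indegree congestion of the pull-based update via sorting and segmented scans; this is a legitimate and arguably more careful realization of the same idea.
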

\begin{proof}
	Let $\parent(v)$ denote the parent of node $v$ and define $\parent(r) = r$ for a root node $r$.
	Consider the following pointer-forwarding algorithm that is run in parallel for every node $v$.	
	In every round, for every child $u$ of $v$, we set $\parent(u) := \parent(v)$.
	The process terminates once $v$ points to a root, i.e., to a node $r$ for which $\parent(r) = r$.
	Notice that after every step, following the parent pointers still leads to the root node.
	
	Let $(v_1, v_2, \ldots, v_k)$ be the directed path from node $v_1$ to the root $r = v_k$ of its subtree in round $t$.
	After one round of the algorithm, every $v_i$ is connected to $v_{\min\{ i + 2, k \}}$.
	Thus, the length of the path is at most $\lceil k/2 \rceil$.
	After $\bigO(\log k) = \bigO(\log d)$ rounds the algorithm terminates yielding the claim. 	
\end{proof}

\paragraph{Root a Tree}
Given \Cref{lemma:rootedForest}, what remains to show for our algorithm is how to root a tree.
The idea is to once more use the graph exponentiation method to learn an $\ell$-hop neighborhood of a node in $\log \ell$ steps.
However, in order to prevent the space requirement from getting out of hand, each node performs only a bounded number of exponentiation steps, after which all nodes that already know their parent in the output orientation are removed from the graph.
Then this process is iterated until at most one node (per connected component) remains.

%To provide a high level intuition of our method, consider a $\Delta$-regular tree of $n$ nodes.
%As the first step, we prune out all the leaves of the graph.
%The first crucial observation is that, in the remaining tree, we have roughly $n / \Delta = n'$ nodes.
%In a sense, we can now afford to do one exponentiation step, since it adds at most $n' \cdot \Delta = n$ virtual edges.
%
%Notice now that performing a second exponentiation step adds a factor of $\Delta^2$ edges, i.e., pruning out the leaves is no longer enough to save enough memory.
%However, since we performed one exponentiation step, we can also cut two layers of leaves in one step.
%This will now result in a graph with roughly $n \cdot \Delta^{-1} \cdot \Delta^{-2} = n' / \Delta^2$ nodes.
%In other words, we can now afford another exponentiation step that yields a graph with
%\[
%	n / (\Delta \cdot \Delta^2) \cdot \Delta \cdot \Delta^2 = n 
%\]
%edges.
%After $\log d$ steps, the tree is rooted.
%
%Unfortunately in the general case, one step of pruning is not guaranteed to help us as much as in the regular case.
%In a sense, it might take many pruning steps before our memory ``budget'' allows every node to perform a new exponentiation step.
%To tackle this issue, we introduce a budget $B$ of virtual edges for every node.
%Intuitively, a node performs exponentiation steps as long as it has enough budget to do so\todo{write algorithm down explicitly. A node only does an exponentiation step $i$ if it sees all of its $2^{i - 1}$-hop neighbors}.

\paragraph{Tree-Rooting Algorithm $\mathcal A$}
In the following, we give a formal description of an algorithm $\mathcal A$ for rooting a tree of diameter $d$.
The algorithm takes an integer $B$ as input parameter that describes the initial memory \emph{budget} for each node $v$, i.e., an upper bound on the number of edges that $v$ may add before the first node removal.
The execution of $\mathcal A$ is subdivided in phases $i = 0,1,\ldots$ which consist of $O(\log d)$ rounds each.
Set $B_0 = B$.

\paragraph{Phase $i$ of $\mathcal A$}
In phase $i$, each node $v$ does the following:

In round $0$, node $v$ sets its local budget $B_v$ to $B_i$.
In each following round $j = 1, 2, \ldots$, node $v$ first connects its $1$-hop neighborhood to a clique by adding edges between all its neighbors that are not connected yet, but it does so only if the number of added edges is at most $B_v$.
Then $v$ updates its local budget by decreasing $B_v$ by the number of edges that $v$ added.
If $B_v$ was not large enough to connect $v$'s $1$-hop neighborhood to a clique, then $v$ does not add any edges in round $j$.
This concludes the description of round $j$, of which there are $O(\log d)$ many.

Denote the tree at the beginning of phase $i$ by $T_i$, and for each neighbor $u$ of $v$, denote the set of nodes that are closer to $u$ than $v$ in $T_i$ by $S^i_u(v)$.   
Phase $i$ concludes with a number of special rounds:
First, $v$ checks whether it has a neighbor $u'$ in $T_i$ with the following properties:
\begin{enumerate}[i)]
	\item $S^i_{u}(v)$ is contained in the current $1$-hop neighborhood of $v$, for each neighbor $u$ of $v$ in $T_i$ satisfying $u \neq u'$.
	\item $S^i_{u'}(v)$ is not (entirely) contained in the current $1$-hop neighborhood of $v$.
\end{enumerate}
If such a neighbor $u'$ exists (which, by definition, is unique), then $v$ sets $\parent(v) = u'$.
Second, $v$ removes all edges that it added during phase $i$ (regardless of whether a parent is set).
Third, $v$ is removed from $T_i$ if it already chose its parent, i.e., if it set $\parent(v)$.
Fourth, the budget per node is updated, by setting $B_{i+1} = B_i \cdot n_i/n_{i+1}$, where $n_i$ and $n_{i+1}$ are the numbers of nodes of $T_i$ and $T_{i+1}$, respectively.
This concludes the description of phase $i$.

We execute this process until at most one node remains.

\paragraph{Termination of $\mathcal{A}$} Since in each phase (at the very least) all leaves are removed, this process eventually terminates.

It is straightforward to check that if a node $v$ chooses its parent $u' = \parent(v)$ in phase $i$, then any neighbor $u \neq u'$ of $v$ in $T_i$ also chooses its parent in phase $i$, and, what is more, $u$ chooses $v$ as its parent (which, combined with the following observations, shows that the orientation of the input tree induced by the parent choices of the nodes yields indeed a rooted tree).
Hence, given the above process, one of two things happens in the end: either exactly one node remains, or all nodes are removed but there is exactly one pair of nodes that chose each other as their parent.
In the former case, no action has to be taken, as the remaining node is simple the root of our rooted tree.
In order to handle the latter case, we add a simple fifth special round at the end of each phase $i$:
Each node $v$ removed in phase $i$ checks whether the node it chose as its parent chose $v$ as its parent. 
If this is the case, then the node with the higher id removes its choice of parent and becomes the root node of the input tree.
See \cref{fig:cutAlgo} for an illustration of algorithm $\mathcal A$.
% \todo{remark that some of the rounds in our algorithm actually take a constant number of rounds?} 

\begin{figure}
	\begin{subfigure}{0.45\textwidth}
		\centering
		\includegraphics[scale=0.8]{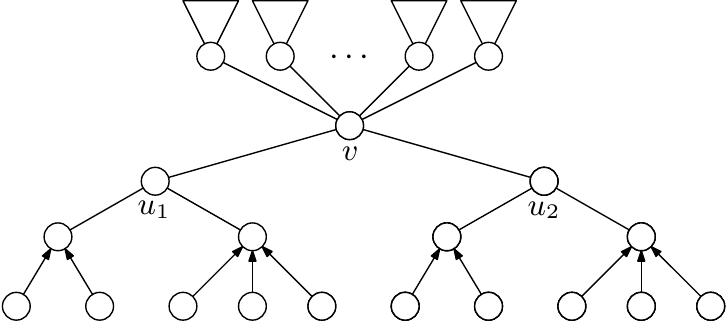}
		\caption{The leaves have only one neighbor, which becomes their parent. Nodes $u_1, u_2$, and $v$ do not have enough budget to add edges.}
		\label{subfig:cut1}
	\end{subfigure}\qquad\quad
	\begin{subfigure}{0.45\textwidth}
		\centering
		\includegraphics[scale=0.8]{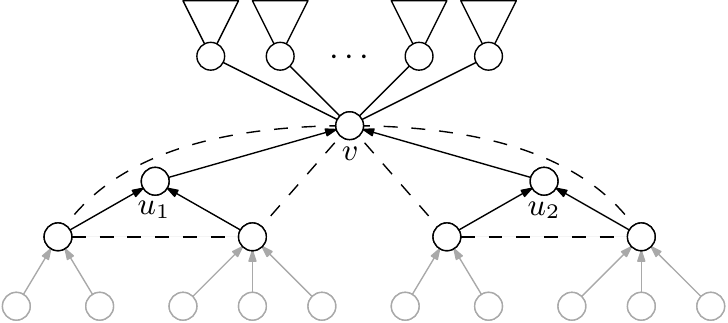}
		\caption{Once the leaves are removed once, enough budget is freed for nodes $u_1$ and $u_2$ to add edges that connect their neighbors.}
		\label{subfig:cut2}
	\end{subfigure}\\\vspace{2px}
	\begin{subfigure}{0.95\textwidth}
		\centering
		\vspace{20px}
		\includegraphics[scale=0.8]{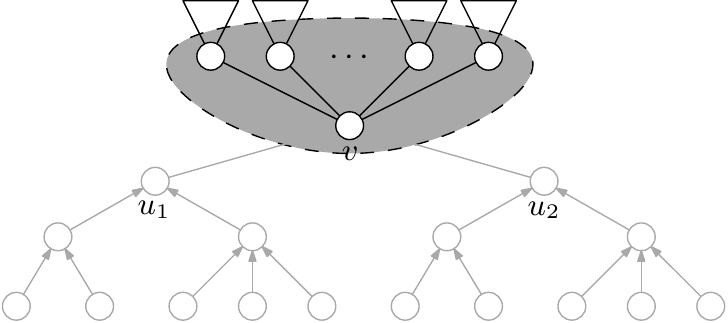}
		\caption{Once nodes $u_1$ and $u_2$ know that $v$ is their parent, node $v$ can focus its budget to the remainder of the tree illustrated by the gray area.}
		\label{subfig:cut3}
	\end{subfigure}
	
	\caption{An illustration of the (local) steps performed by algorithm $\mathcal A$. For the sake of this example, assume that the initial budget is $2$. As illustrated in \cref{subfig:cut1}, leaves are always able to determine their parent. Assuming that the tree has non-leaf minimum degree $3$, removing all the leaves at least roughly doubles the budget of all nodes. Thus, in the second step, illustrated in \cref{subfig:cut2}, nodes with degree at most $4$ are able to complete their $1$-hop neighborhoods into a clique. Small subtrees rooted at (or connected to) $v$ are removed quickly in our process, as illustrated in \cref{subfig:cut3}, and therefore, node $v$ requires large subtrees to survive for many phases.}
	\label{fig:cutAlgo}
\end{figure}
\newpage
\paragraph{Running Time of $\mathcal{A}$}
We present a number of lemmas in order to determine the runtime of algorithm $\mathcal A$. Here, a subtree $T(v)$ rooted at some node $v$ corresponds to the descendants of $v$ in the rooted tree $T$ returned by $\mathcal A$ (or in the rooted subtree of $T$ induced by the nodes of some $T_i$).
\begin{lemma}
	\label[lemma]{lemma:subtreeremoval}
	Consider some arbitrary phase $i$, and let $T(v)$ be the subtree of $T_i$ rooted at $v$. If $|T(v)| \leq \sqrt{B_i}$, then $v$ chooses its parent in phase $i$ and is removed from the tree.
\end{lemma}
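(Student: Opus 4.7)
The plan is to prove the claim by strong induction on $|T(v)|$, establishing simultaneously for every node $w\in T(v)$ with $|T(w)|\leq\sqrt{B_i}$ two things: (a) within $O(\log|T(w)|)$ ordinary rounds of phase $i$ the $1$-hop neighborhood of $w$ contains all of $T(w)\setminus\{w\}$, and (b) $w$'s cumulative edge additions over those rounds stay below its budget $B_w$, so that $w$ can actually afford every clique completion the algorithm prescribes. The base case $|T(w)|=1$ is a leaf of the output rooted tree: $T(w)\setminus\{w\}=\emptyset$ is contained trivially, and no edges have to be added. For the inductive step at $v$, every child $u_j$ of $v$ in the output satisfies $|T(u_j)|<|T(v)|\leq\sqrt{B_i}$, so by the hypothesis, $u_j$'s $1$-hop already contains $T(u_j)\setminus\{u_j\}$ within budget after $O(\log|T(u_j)|)$ rounds; one further exponentiation step by $u_j$ then completes its $1$-hop (now equal to $T(u_j)\cup\{v\}$) to a clique and attaches every node of $T(u_j)$ directly to $v$. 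Taking the maximum over $j$ gives (a) for $v$ in $O(\log|T(v)|)=O(\log d)$ rounds, comfortably within the $O(\log d)$ ordinary rounds of phase $i$.

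Every edge $v$ ever adds has both endpoints in $v$'s current $1$-hop, so $v$'s cumulative edge additions are at most $\binom{|R_v|}{2}$, where $R_v$ is the union of $v$'s $1$-hop neighborhoods across the phase. The ``descendant side'' of $R_v$ lies inside $T(v)$ and contributes at most $|T(v)|-1$ nodes, which alone yields at most $\binom{|T(v)|}{2}\leq B_i/2$ pairs, well inside the budget. The \emph{main obstacle} is controlling the ``parent side'' of $R_v$: parent-side nodes only enter $v$'s $1$-hop through a chain of successful exponentiations by $v$'s output-parent $p$ and its ancestors in $T_i$, and the growth of $v$'s $1$-hop in that direction is not controlled by the size of $T(v)$ alone. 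The plan to handle this is a careful accounting argument: an ancestor $q$ of $v$ can add nodes to $v$'s $1$-hop only through its own exponentiations, each of which is itself constrained by $q$'s budget and by the tree structure (in $T_i$ the branches hanging off $q$ are pairwise node-disjoint, so each exponentiation step of $q$ contributes only a bounded set of genuinely new parent-side nodes to $v$'s view); by bounding how many ancestors can afford to exponentiate in $O(\log d)$ rounds while themselves staying within budget $B_i$, one shows that $R_v$ contains only $O(|T(v)|)$ parent-side nodes. Combining the two sides then gives $|R_v|=O(|T(v)|)$ and a total edge cost of $O(|T(v)|^2)\leq B_i$, which fits inside $B_v$ after absorbing constants into the hypothesis $|T(v)|\leq\sqrt{B_i}$, establishing (b).

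Once (a) and (b) are in place, the end-of-phase special rounds finish the argument. Condition (i) of the parent-selection rule holds because $v$'s $1$-hop contains $S^i_u(v)=T(u)$ for every child $u$ of $v$; condition (ii) holds because $R_v$ misses at least one node of $S^i_p(v)$, where $p$ is $v$'s output-parent---by the same accounting as above, parent-side ancestors with large subtrees do not exponentiate and therefore block the far parent side from ever reaching $v$'s $1$-hop inside the $O(\log d)$ rounds of phase $i$. Hence, in the first special round of phase $i$, $v$ sets $\parent(v)=p$, and in the third special round it is removed from $T_i$, which is exactly the conclusion of the lemma.
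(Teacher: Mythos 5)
There is a genuine gap, and it sits exactly where you yourself flag the ``main obstacle.'' Your budget bound rests on the claim that $R_v$ contains only $O(|T(v)|)$ parent-side nodes, so that $v$'s total edge cost is $O(|T(v)|^2)\le B_i$. That claim is false: the parent-side growth of $v$'s neighborhood is governed by the subtrees and degrees of $v$'s \emph{ancestors}, not by $|T(v)|$. For a leaf $v$ (so $|T(v)|=1$) whose parent $p$ is the center of a star on $\Theta\bigl(\sqrt{B_i}\bigr)$ nodes, $p$ affordably completes its neighborhood to a clique in round $1$, and $R_v$ already contains $\Theta\bigl(\sqrt{B_i}\bigr)$ parent-side nodes --- so $\binom{|R_v|}{2}=\Theta(B_i)$, not $O(1)$; with two such levels stacked, $v$ really is forced to pay $\Theta(B_i)$ for cross-edges between the blocks of parent-side nodes that two different neighbors feed into its view, and small-subtree nodes \emph{can} exhaust their budget on the parent side and stall. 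The same defect breaks your inductive step: after a child $u_j$ of $v$ has absorbed $T(u_j)$, its $1$-hop neighborhood is not ``equal to $T(u_j)\cup\{v\}$'' but may contain arbitrarily many parent-side nodes, so the affordability of the ``one further exponentiation step'' by $u_j$ is precisely what is in doubt. The ``careful accounting argument'' that is supposed to close this is only announced, never carried out, and the quantity it is supposed to prove is not true.

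The paper's proof sidesteps the parent side entirely by never bounding the budget of $v$ or of nodes near $v$. It only uses node/round pairs for which the budget is \emph{automatically} safe: a node $u\in T(v)$ at distance at least $2^k$ from $v$ has, through round $k$, a $1$-hop neighborhood entirely inside $T(v)$ (anything outside $T(v)$ is at distance more than $2^k$ from $u$, and distances shrink by at most a factor of $2$ per round), so every edge $u$ adds in rounds $1,\dots,k+1$ joins two nodes of $T(v)$ and its cumulative cost is at most $\binom{|T(v)|}{2}\le B_i$. The connecting work is then pipelined along each path from $v$: the nodes at distances $2^0,\dots,2^k$ from $v$ each perform one such guaranteed-affordable exponentiation in rounds $1,\dots,k+1$, dragging every node of $T(v)$ into $v$'s neighborhood within $O(\log d)$ rounds --- note that $v$ itself need not add a single edge. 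To repair your argument you would have to restrict claim (b) to exactly these deep-node/early-round pairs; as written, the proof does not go through.
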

\begin{proof}
%	Let $h, k$ be some arbitrary non-negative integers satisfying $h \le k$, and let $w$ be the neighbor of $v$ in $T_i$ that is not contained in $T(v)$.
%	Consider any node $u$ in $T(v)$ with distance at least $2^k$ to $v$.
%	After round $h$ of phase $i$, the distance from $u$ to $v$ in the current graph is at least $s$ since the distance 
	Let $k$ be some arbitrary non-negative integer, and consider any node $u$ in $T(v)$ with distance at least $2^k$ to $v$.
	Observe that, according to $\mathcal A$, the distance between any two nodes in $T_i$ decreases by a factor of at most $2$ per round.
	Hence, after round $k$ of phase $i$, all nodes contained in the $1$-hop neighborhood of $u$ are actually also contained in $T(v)$.
	Thus, each edge that $u$ would have added if it had connected its $1$-hop neighborhood to a clique in each of the rounds $1, \ldots, k+1$, disregarding any budget constraints, is an edge between nodes from $T(v)$.
	Since $|T(v)| \leq \sqrt{B_i}$, the number of edges between nodes from $T(v)$ is bounded from above by $B_i$; it follows that $u$ had enough budget to indeed connect its $1$-hop neighborhood to a clique in each round up to and including round $k+1$.

	Now consider any node $w$ whose distance to $v$ in $T_i$ is at least $2^k$, but at most $2^{k+1} - 1$.
	Let $w_0, \ldots, w_{k}$ be nodes on the unique path between $v$ and $w$ with distance $2^0, \ldots, 2^k$ to $v$.
	Due to the observations above, it is straightforward to check that, in each round $1 \le h \le k$, node $w_{h-1}$ connects node $w_h$ to $v$, while in round $k+1$, node $w_k$ connects node $u$ to $v$.
	Since the depth of $T(v)$ is upper bounded by $\log d$, it follows that after $\log d$ rounds, all nodes from $T(v)$ are contained in $v$'s $1$-hop neighborhood.
	Hence, $v$ will choose the only neighbor that is not contained in $T(v)$ as its parent, and $v$ is removed in phase $i$.
	Since $k$ was chosen arbitrarily, the lemma statement follows.
%	For any node that never sees beyond $T(v)$, the budget does not run out.
%	Therefore, the distance of $v$ to every leaf halves every iteration.
%	Once $v$ sees all the leaves it also sees all the nodes on the paths to the leaves and thus, it is removed from the graph.
\end{proof}

%The budget per node is initiated to $B_0 = \Delta^{c}$, where $\Delta^{-c}$ is the sampling probability.
%\todo{I guess we put this paragraph elsewhere or delete it.}
%After each iteration the freed budget is distributed evenly among the remaining nodes.
\begin{lemma}
	\label[lemma]{lemma:subtrees}
	Let $T$ be a rooted tree with $n$ nodes and diameter at most $d$.
% 	\todo{mention that in our case $d$ will be some $\log n$, so we don't care about the cubic overhead}
	Let $1 \leq \alpha \leq n$, and let $C$ be the set of nodes $v$ with the property that $|T(v)| \leq \alpha$.
	Then, $|C| \geq n \cdot (\alpha/(d + \alpha))$.
\end{lemma}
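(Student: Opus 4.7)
The plan is to prove the equivalent reformulation $\alpha\,|B| \leq d\,|C|$, where $B := V(T)\setminus C$ consists of the ``large'' nodes $v$ with $|T(v)| > \alpha$; since $n=|B|+|C|$, this inequality rearranges immediately into the bound claimed. I would establish it by a simple fractional charging argument that sends one unit of weight from every $v\in B$ down to the part of $T(v)$ that lies in $C$.

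Two structural ingredients underlie the argument. First, $B$ is closed under taking ancestors: if $v\in B$ and $p$ is the parent of $v$, then $|T(p)|\geq |T(v)|+1 > \alpha$, so $p\in B$; dually, $C$ is closed under taking descendants. A consequence is that, along any root-to-$u$ path, the $B$-ancestors of $u$ form a contiguous initial segment, so their number is at most $\mathrm{depth}(u)$, which in turn is at most $d$ because the eccentricity of the root is bounded by the diameter. Second, I need the inductive claim that for every $v\in B$ one has $|T(v)\cap C|\geq \alpha$. I would prove this by induction on $|T(v)|$: if $v$ has some child $c$ that is itself in $B$, then $T(c)\subseteq T(v)$ and $|T(v)\cap C|\geq |T(c)\cap C|\geq \alpha$ by the inductive hypothesis; otherwise all children of $v$ lie in $C$, so descendant-closedness gives $T(v)\setminus\{v\}\subseteq C$, and $|T(v)|-1\geq \alpha$ because $|T(v)|$ is an integer strictly exceeding $\alpha$.

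With the claim in hand, the charging is essentially automatic. Each $v\in B$ spreads its unit of weight uniformly over $T(v)\cap C$, so each recipient in $T(v)\cap C$ receives at most $1/\alpha$ from $v$. Any $u\in C$ only picks up weight from its $B$-ancestors, of which there are at most $d$, so the total weight collected at $u$ is at most $d/\alpha$. Summing over $u\in C$ yields $|B|\leq (d/\alpha)\,|C|$, which is the desired inequality. The step I expect to require the most care is the inductive claim; once one observes that any $B$-child of $v$ transfers the full lower bound on $|T(\cdot)\cap C|$ for free via subtree containment, the base case collapses to the definition of $B$, and the diameter hypothesis is used only at the very end, through the depth bound used in the charging.
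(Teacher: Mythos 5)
Your proof is correct and takes essentially the same route as the paper's: a charging argument in which each node outside $C$ spreads one unit of weight uniformly over its at-least-$\alpha$ descendants in $C$, while each node of $C$ has at most $d$ ancestors, yielding $|B| \le (d/\alpha)\,|C|$. Your induction establishing $|T(v)\cap C|\ge \alpha$ is just a more explicit version of the paper's reduction to the leaves of the pruned tree $T$ minus $C$ (and both arguments share the same minor imprecision in the step ``$|T(v)|-1\ge\alpha$ since $|T(v)|>\alpha$ is an integer,'' which strictly requires $\alpha$ to be an integer or to be replaced by $\lceil\alpha\rceil$).
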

\begin{proof}
	Assign one dollar to each node that is not contained in a subtree of size at most $\alpha$.
	Every such node then distributes its dollar evenly among all of its descendants in $C$.
	Note that, for each leaf node $w$ of the tree obtained from $T$ by deleting all nodes in $C$, the number of descendants of $w$ in $C$ is at least $\alpha$ since otherwise $w$ would be in $C$, by the definition of $C$.
	Hence, all nodes that are not contained in $C$ have at least $\alpha$ descendants in $C$.
	
	Consider then any node $v \in C$.
	Since the diameter of the tree is $d$, node $v$ can have at most $d$ ancestors in $T$.
	Every ancestor of $v$ distributes at most $1/\alpha$ dollars to $v$ and therefore, $v$ receives at most $d / \alpha$ dollars.
	As the amount of dollars did not change during its redistribution from nodes not contained in $C$ to nodes in $C$, we can conclude that $|C| \cdot (d / \alpha) \geq n - |C|$ which implies that $|C| \geq n \cdot (\alpha/(d + \alpha))$.
%	$|C| \leq n \cdot (\alpha /d)$.
\end{proof}

%\begin{lemma}
%	\label{lemma:budgetincrease}
%	Let $B_t$ be the budget after iteration $t$.
%	Assume $d \leq B^{1/3}$ \todo{here we use that the budget is always going to be ``large'', e.g., some large $\poly \log n$}.
%	Then $B_{t + 3 \log d} \geq B_{t}^2$.
%\end{lemma}
%\begin{proof}
%	After $\log d$ iterations, by \Cref{lemma:subtreeremoval}, all subtrees of size at most $\sqrt{B_t}$ are removed after $\log d$ steps.
%	By \Cref{lemma:subtrees}, at least a fraction of $\sqrt{B_t} / d \geq \sqrt{B_t}$ of nodes are removed.
%	Repeating this argument three times, the new budget is $B_t \cdot (B_t^{1/3})^3 \geq B_t^2$.
%\end{proof}

\begin{lemma}
	\label[lemma]{lemma:rootingruntime}
	Assume that the input parameter $B$ for our algorithm $\mathcal A$ satisfies $B \ge d^3$. Then the runtime of $\mathcal A$ on trees with $n$ nodes and diameter $d$ is $O(\log d \cdot \log \log n)$.
\end{lemma}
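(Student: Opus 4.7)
The plan is to show that $\mathcal{A}$ completes in $O(\log\log n)$ phases, which, combined with the $O(\log d)$ rounds per phase built into the algorithm, yields the claimed bound. Denote by $n_i$ the number of nodes in $T_i$, so $n_0 \le n$.

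First, I establish the per-phase shrinkage
\[
n_{i+1}\;\le\;n_i\cdot\frac{d}{d+\sqrt{B_i}}\;\le\;\frac{n_i\, d}{\sqrt{B_i}}.
\]
Each connected component of $T_i$ is an induced subgraph of the input tree, so it is itself a subtree, and its distances are inherited from the original tree, hence its diameter is at most $d$. Applying \Cref{lemma:subtrees} with $\alpha=\sqrt{B_i}$ to each component and summing over components, the number of nodes $v\in T_i$ with $|T(v)|\le\sqrt{B_i}$ is at least $n_i\cdot\sqrt{B_i}/(d+\sqrt{B_i})$. By \Cref{lemma:subtreeremoval}, every such node chooses its parent during phase $i$ and is then removed, which gives the displayed bound.

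Next, the update rule $B_{i+1}=B_i\cdot n_i/n_{i+1}$ makes $B_i\cdot n_i$ invariant, so $B_i n_i = B_0 n_0 \ge d^3 n$ by the hypothesis $B\ge d^3$. Hence $\sqrt{B_i}\ge d^{3/2}\sqrt{n/n_i}$, and plugging this into the shrinkage estimate yields
\[
n_{i+1}\;\le\;\frac{n_i\cdot d}{d^{3/2}\sqrt{n/n_i}}\;=\;\frac{n_i^{3/2}}{\sqrt{d\,n}}.
\]
Writing $y_i:=\log(n/n_i)$ (so $y_0=0$), this becomes the linear recurrence
\[
y_{i+1}\;\ge\;\tfrac{3}{2}\,y_i\;+\;\tfrac{1}{2}\log d,
\]
whose solution is $y_i\ge\log d\cdot\bigl((3/2)^i-1\bigr)$. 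Therefore, after $i=\bigl\lceil\log_{3/2}(1+\log n/\log d)\bigr\rceil = O(\log\log n)$ phases we have $y_i\ge\log n$, i.e., $n_i\le 1$, and $\mathcal{A}$ has halted. Multiplying by $O(\log d)$ rounds per phase gives the claimed total bound of $O(\log d\cdot\log\log n)$.

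The main obstacle will be the warm-up of the recurrence: since $y_0=0$, a purely multiplicative bound would give no decay on the first step. The hypothesis $B\ge d^3$ is precisely what produces the additive term $\tfrac{1}{2}\log d$ per phase—equivalently, ensures $\sqrt{B_0}/d\ge\sqrt{d}$—which ignites the doubly-exponential cascade from $y_0=0$. The very small-$d$ regime ($d=O(1)$) falls out trivially: each phase gathers every $1$-hop neighborhood in $O(1)$ rounds and $\mathcal{A}$ then reduces to the pointer-jumping analysis of \Cref{lemma:rootedForest}, comfortably within $O(\log d\cdot\log\log n)$.
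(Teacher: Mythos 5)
Your proof is correct and follows essentially the same route as the paper's: both combine \Cref{lemma:subtreeremoval} with \Cref{lemma:subtrees} to show that a $\sqrt{B_i}/(d+\sqrt{B_i})$-fraction of the nodes is removed in each phase, and both exploit the invariant $B_i\, n_i = B_0\, n_0$ together with $B\ge d^3$ to get doubly-exponential progress over $O(\log\log n)$ phases of $O(\log d)$ rounds each. The only difference is bookkeeping—you track the progress via the recurrence for $y_i=\log(n/n_i)$, while the paper tracks the budget via $B_{i+1}\ge B_i^{7/6}$ (hence $B_{i+5}\ge B_i^2$); the two are equivalent through the invariant, and your version spells out the final ``$O(\log\log n)$ phases suffice'' step somewhat more explicitly.
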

\begin{proof}
	Observe that the sequence $B_0, B_1, \ldots$ of budgets at the beginning of phases $0, 1, \ldots$ is monotonically non-decreasing, by definition.
	Hence, $B_i \ge d^3$ for all phases $i$.
	Now consider some arbitrary phase $i$, and let $n_i$ denote the number of nodes of $T_i$.
	By Lemma \ref{lemma:subtreeremoval} and Lemma \ref{lemma:subtrees}, the number of nodes that are removed in phase $i$ is at least $n_i \cdot (\sqrt{B_i}/(d + \sqrt{B_i}))$.
	Thus, for the new budget $B_{i+1}$, it holds by definition that
	\[
		B_{i+1} \geq B_i \cdot \frac{1}{1-\frac{\sqrt{B_i}}{d + \sqrt{B_i}}} =  B_i \cdot \left(\frac{d + \sqrt{B_i}}{d}\right) \geq \frac{B_i^{\frac{3}{2}}}{d} \enspace .
	\]
	Since, as observed above, $d \le B_i^{1/3}$, we obtain $B_{i+1} \geq B_i^{7/6}$, which implies $B_{i+5} \geq B_i^2$.
	Recall that in each phase $i$, at least a $(\sqrt{B_i}/d)$-fraction of nodes is removed.
	Thus, after $\bigO(\log \log n)$ phases, all nodes (except possibly for one node) have been removed and the termination condition of $\mathcal A$ is satisfied.
	Since every phase takes $O(\log d)$ time, the claim follows.
\end{proof}

Now we have all the ingredients to prove the second part of \Cref{lemma:hashToMin}. It is a simple corollary of the following theorem.

\begin{theorem}
	Consider a forest $F$ of $n$ nodes where every tree is of diameter at most $d$.
	There is an MPC algorithm that finds the connected components of $F$ in time $O(\log d \cdot \log \log n)$ where $M \cdot S = \bigO(n \cdot d^3)$.
% 	\todo{I guess, here we will use that the budget is always going to be ``large'', e.g., some large $\poly \log n$. Well, actually this only happens in Theorem \ref{thm2}.}
\end{theorem}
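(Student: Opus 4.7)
The plan is to combine the tree-rooting algorithm $\mathcal{A}$ from the preceding subsection with \Cref{lemma:rootedForest} applied in parallel across all trees of $F$. Concretely, I would first invoke $\mathcal{A}$ on each tree of $F$ simultaneously, using initial per-node budget $B = d^3$. Since no edges cross between distinct trees, running the procedure on all trees in parallel causes no interference, and by \Cref{lemma:rootingruntime} every tree is rooted in $O(\log d \cdot \log \log n)$ rounds. The total memory footprint of this step is $O(n \cdot d^3)$, since each node maintains at most $d^3$ added edges at any point (budgets only grow as nodes are removed, but what a node stores at any time never exceeds its current clique completion, which used up its budget), and summing over all $n$ nodes gives the required bound on $M \cdot S$.

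After the first step, every node has a parent pointer and each tree has a unique root (the tie-breaking fifth special round at the end of each phase of $\mathcal{A}$ guarantees uniqueness). I would then feed the resulting rooted forest into \Cref{lemma:rootedForest}, which in $O(\log d)$ additional rounds makes each node learn the identifier of the root of its tree. Labeling every node by its root identifier yields the connected component assignment of $F$: two nodes share a component if and only if they end up with the same root label. The combined round complexity is $O(\log d \cdot \log \log n) + O(\log d) = O(\log d \cdot \log \log n)$, and the total memory is dominated by the rooting step, giving $M \cdot S = O(n \cdot d^3)$.

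The main thing to check is that parallelizing $\mathcal{A}$ across many trees of possibly different sizes neither breaks the runtime analysis nor inflates the memory bound. Runtime is fine because $\mathcal{A}$'s bound in \Cref{lemma:rootingruntime} is expressed in terms of the global $n$ and the per-tree diameter bound $d$, both of which upper bound what any individual tree contributes. Memory is fine because the budget $B = d^3$ is assigned per node and the clique-completion operations performed inside one tree never add edges leaving it, so the per-tree accounting simply sums. A minor bookkeeping point is the pointer-forwarding step of \Cref{lemma:rootedForest}: I would apply it to the entire rooted forest at once, which is exactly the setting of that lemma, so no extra work is needed.
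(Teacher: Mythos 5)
Your overall route---run $\mathcal{A}$ in parallel on all trees of $F$ with initial budget $B=d^3$, then apply \Cref{lemma:rootedForest} to the resulting rooted forest and label each node by its root's identifier---is exactly the paper's proof. The runtime part of your argument is fine, as is the observation that the tie-breaking rule yields a unique root per tree. (The paper is slightly more careful in spelling out how the two ``global'' ingredients of $\mathcal{A}$, the termination condition and the budget update, are adapted to forests, and in checking that \Cref{lemma:subtreeremoval,lemma:subtrees,lemma:rootingruntime} go through; your remark that disjoint trees do not interfere covers the substance of this.)

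The one genuine gap is in the memory accounting. Your claim that ``each node maintains at most $d^3$ added edges at any point'' is false: the budget is rescaled at the end of every phase to $B_{i+1} = B_i \cdot n_i/n_{i+1}$ (where, for the forest version, $n_i$ must count the surviving nodes across the \emph{whole} forest), so in a late phase a surviving node's budget---and hence the number of edges it may add---can be as large as $n\, d^3/n_i \gg d^3$. The bound that actually holds is the aggregate invariant: since at most $n_i$ nodes are alive in phase $i$ and each adds at most $B_i$ edges, the number of edges added \emph{in any single phase} is at most $n_i \cdot B_i = n_0 \cdot B_0 = n \cdot d^3$, where the equality is exactly the budget-update rule. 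You also need to invoke the second special round, in which every node deletes the edges it added during the phase; without that deletion the $n \cdot d^3$ bound would accumulate over the $O(\log\log n)$ phases and you would only get $M \cdot S = O(n \cdot d^3 \cdot \log\log n)$. With these two corrections the stated memory bound follows and the proof is complete.
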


\begin{proof}
	Imagine that we run algorithm $\mathcal A$ in parallel on all trees of the input forest $F$, with input parameter $B = d^3$.
	There are only two parts of $\mathcal A$ that are of a global nature, i.e., where the actions of nodes do not depend on their immediate neighborhood: the termination condition that all nodes, possibly except for one, have been removed, and the part where the node's budgets are updated from $B_i$ to $B_{i+1}$.
	The former is easily adapted to the case of forests; each node simply terminates when itself or all its neighbors are removed.
	Regarding the updating of the budget, we adapt the tree rooting algorithm as follows: we still set the new budget $B_{i+1}$ to $B_i \cdot n_i/n_{i+1}$, but now $n_i$ and $n_{i+1}$ denote the total number of nodes (i.e., in all trees of the forest) that have not been removed yet at the beginning of phase $i$, resp.\ phase $i+1$. 
	
	In the following, we verify that Lemmas \ref{lemma:subtreeremoval}, \ref{lemma:subtrees}, and \ref{lemma:rootingruntime} also hold for forests instead of trees.
	In the case of Lemma \ref{lemma:subtreeremoval}, this is obvious as the argumentation is local and thus also applies to forests.
	Lemma \ref{lemma:subtrees} trivially also holds for forests since the lemma statement holds for all trees in the forest.
	Finally, since the argumentation of the proof of Lemma \ref{lemma:rootingruntime} does not make use of the fact that the input graph is a tree except when applying Lemmas \ref{lemma:subtreeremoval} and \ref{lemma:subtrees}, it follows that Lemma \ref{lemma:rootingruntime} also holds for forests.

	Hence, our adapted tree rooting algorithm actually transforms the forest into a rooted forest in time $O(\log d \cdot \log \log n)$.
	Now we can apply Lemma \ref{lemma:rootedForest}, and, e.g., color each component with the color of the root node, thereby marking the connected components.
	Due to the runtime given in Lemma \ref{lemma:rootedForest}, our total runtime is still $O(\log d \cdot \log \log n)$.
	
	It remains to show that the claimed memory constraints are satisfied.
	Due to the space guarantee given in Lemma \ref{lemma:rootedForest}, it is sufficient to show that the memory overhead induced by adding edges during the execution of out forest rooting algorithm does not exceed the allowed amount.
	Thus, consider the number of edges added in an arbitrary phase $i$.
	Since each node adds at most as many edges as its budget allows, i.e., at most $B_i$ edges, the total number of edges added in phase $i$ is upper bounded by $n_i \cdot B_i$.
	By the definition of $B_{j+1}$, we have $n_{j+1} \cdot B_{j+1} = n_j \cdot B_j$, for any phase $j$.
	Hence, the value of $n_i \cdot B_i$ is the same for every phase $i$, and we obtain $n_i \cdot B_i = n_0 \cdot B_0 = n \cdot d^3$.
	Therefore, the number of edges added in any phase $i$ does not exceed $n \cdot d^3$, and since all added edges are removed again at the end of each phase, the lemma statement follows.
\end{proof}

\begin{remark}
In the analysis, we implicitly assumed that edges incident on nodes are always added only once.
It could, however, be the case that some node is ``unlucky'' and many of its neighbors add a copy of the same edge many times.
This misfortune could potentially result in adding $n^\eps$ copies of the same (virtual) edge, which could, in turn, overload the memory per machine constraint on the machines containing these unlucky nodes.
For the sake of simplicity, we decided to leave this problem to the shuffling algorithm of the underlying MPC framework that can, for example, load the nodes onto the machines greedily after each communication step.
Since the total memory constraint is satisfied, this is always feasible.
Alternatively, the shuffling algorithm could simply drop duplicate messages.
\end{remark}

\newpage
\section{Open Questions}
 In this paper, we introduced a variant of the $\mpc$ model in which the standard assumption of $S=\widetilde{\Omega}(n)$ memory per machine is removed.
% where $n$ denotes the number of nodes in an input graph.

\paragraph{General Graphs}
We showed that in the case of the $\mis$ problem on trees this assumption is not necessary: Restricting the memory to $n^{\eps}$ per machine still allows an $\bigO(\log^2 \log n)$-round algorithm.
%The state-of-the-art runtime for the $\mis$ problem in the standard high-memory $\mpc$ model in general graphs equals to our bound, and 
The first intriguing open problem follows. 
\begin{center}
	\begin{minipage}{0.95\textwidth}
		\begin{enumerate}
			\item[P1] \emph{Devise a low-memory $\mpc$ algorithm that finds an $\mis$ in general graphs in time $\poly \log \log n$.}
		\end{enumerate}
	\end{minipage}
\end{center}

\paragraph{Other Fundamental Graph Problems}
%The Maximal Independent Set problem is by no means the only problem of interest when it comes to graphs.
%One problem that is in some sense particularly curious in this model, is the Maximal Matching problem.\matodo{can a problem be curious?}
As an $\mis$ of the line graph\footnote{A line graph is a graph with a node for every edge in the input graph, and an edge between two nodes if the corresponding edges are incident.} corresponds to a Maximal Matching in the original graph, an $\mis$ algorithm usually directly gives rise to a Maximal Matching algorithm. 
%In the LOCAL model, an $\mis$ onto solve the Maximal Matching problem and even more so, this approach gives the optimal runtime in the case of trees.
In the $\mpc$ framework, however, it might not even be possible to store the line graph, which seems to complicate the simulation of the $\mis$ algorithm on the line graph.
 hard. 
the linear
, i.e., a graph where the edges of the input graph correspond to nodes in the line graph, in memory.
To the best of our knowledge, the field of Maximal Matching in the $\mpc$ world is wide open.
Naturally, there are many other standard graph problems some of which we list here.
\begin{center}
	\begin{minipage}{0.95\textwidth}
% 	\vspace{-6pt}
	%\begin{mdframed}[hidealllines=true, backgroundcolor=gray!00]
		\begin{enumerate}
			\item[P2] \emph{Devise an efficient low-memory $\mpc$ algorithm for the Maximal Matching problem.}
			\item[P4] \emph{Devise an efficient low memory $\mpc$ algorithm for the $(\Delta + 1)$-coloring problem.}
			\item[P5] \emph{Devise an efficient low memory $\mpc$ algorithm for the $(2\Delta - 1)$-edge-coloring problem.}
			\item[P4] \emph{Devise an efficient low memory $\mpc$ algorithm that finds an $\bigO(1)$-coloring in time $\bigO( \log d )$ in trees.}
		\end{enumerate}
		%\emph{Devise an efficient low memory $\mpc$ algorithm for the Maximal Matching problem.} \\ 
 		%\emph{Devise an efficient low memory $\mpc$ algorithm for the $(\Delta + 1)$-coloring problem.} \\ 
 		%\emph{Devise an efficient low memory $\mpc$ algorithm for the $(2\Delta - 1)$-edge-coloring problem.} \\ 
		%\emph{Devise an efficient low memory $\mpc$ algorithm that finds an $\bigO(1)$-coloring in time $\bigO( \log d )$ in trees.}
	%\vspace{-8pt}
	%\end{mdframed}
	\end{minipage}
\end{center}

 \bibliographystyle{alpha}
 \bibliography{ref}

\end{document}